\tiny\color{gray},
\theoremstyle{plain}
\newtheorem{theorem}{Theorem}[section]
\newtheorem{proposition}[theorem]{Proposition}
\newtheorem{lemma}[theorem]{Lemma}
\theoremstyle{definition}
\newtheorem{definition}[theorem]{Definition}
\newtheorem{assumption}[theorem]{Assumption}
\theoremstyle{remark}
\newtheorem{remark}[theorem]{Remark}
\newcommand\inner[2]{\langle #1, #2 \rangle}
\newcommand{\E}{\mathbb{E}}
\newcommand{\R}{\mathbb{R}}
\newcommand{\p}{P}
\newcommand{\ptrue}{\mathbb{\p}^*}
\newcommand{\grad}{\nabla}
\newcommand{\ourapproach}{\textrm{MC-EIF}}
\newcommand\inv[1]{#1^{\text{-}1}}
\icmltitlerunning{Automated Efficient Estimation using Monte Carlo Efficient Influence Functions}
\begin{document}

\twocolumn[
\icmltitle{Automated Efficient Estimation using\\ Monte Carlo Efficient Influence Functions}



\icmlsetsymbol{equal}{*}

\begin{icmlauthorlist}
\icmlauthor{Raj Agrawal}{basis,broad}
\icmlauthor{Sam Witty}{basis,broad}
\icmlauthor{Andy Zane}{basis,umass}
\icmlauthor{Eli Bingham}{basis,broad}
\end{icmlauthorlist}

\icmlaffiliation{umass}{College of Information and Computer Sciences, UMass, Amherst, United States}
\icmlaffiliation{basis}{Basis Research Institute, NYC, United States}
\icmlaffiliation{broad}{Broad Institute of MIT and Harvard, Cambridge, United States}

\icmlcorrespondingauthor{Raj Agrawal}{raj@basis.ai.}

\icmlkeywords{Machine Learning, Efficient Influence Function}

\vskip 0.3in
]



\printAffiliationsAndNotice{}  

\begin{abstract}
Many practical problems involve estimating low dimensional statistical quantities with high-dimensional models and datasets. Several approaches address these estimation tasks based on the theory of influence functions, such as debiased/double ML or targeted minimum loss estimation. This paper introduces \textit{Monte Carlo Efficient Influence Functions} (MC-EIF), a fully automated technique for approximating efficient influence functions that integrates seamlessly with existing differentiable probabilistic programming systems. MC-EIF automates efficient statistical estimation for a broad class of models and target functionals that would previously require rigorous custom analysis. We prove that MC-EIF is consistent, and that estimators using MC-EIF achieve optimal $\sqrt{N}$ convergence rates. We show empirically that estimators using MC-EIF are at parity with estimators using analytic EIFs. Finally, we demonstrate a novel capstone example using MC-EIF for optimal portfolio selection.
\end{abstract}

\section{Introduction}
\label{sec:introduction}

Over the past several decades, there has been remarkable research progress in methods for robust and efficient statistical estimation, especially for high dimensional problems. A particularly compelling class of such methods are built on a foundation of \textit{efficient influence functions} (EIF), i.e., functional derivatives in the space of probability distributions~\cite{semiparametric_review}. These methods have been particularly fruitful in causal inference applications, where estimating quantities such as the average treatment effect require modeling high-dimensional nuisance parameters relating confounders to treatment and outcome variables. Intuitively, these methods focus finite statistical resources on quantities that matter, and not on  nuisance parameters that only indirectly inform the statistical quantities we wish to estimate.

Despite their successes, estimation methods based on the EIF have lagged behind the kinds of automation that machine learning practitioners have grown accustomed to, instead requiring complex manual derivation on a case-by-case basis. This is contrasted with the generality of automatic differentiation (AD)~\cite{ad_survey} systems and probabilistic programming languages (PPLs) such as Pyro~\cite{pyro}, which automate numerical computations for probabilistic inference. EIF-based estimators have historically eluded this level of automation and generality, because exact recovery of the EIF requires solving high-dimensional integral equations.

\textbf{Contributions.} In this paper we introduce \textit{Monte Carlo Efficient Influence Functions} (MC-EIF), a general and automated technique for numerically computing EIFs using only quantities that are already available from existing AD and PPL systems. Our key insight is that EIFs can be expressed equivalently as a product of (i) the gradient of the functional, (ii) the inverse Fisher information matrix, and (iii) the gradient of the log-likelihood, as shown in \cref{thm:eif_fn} in \cref{sec:MC-EIF}. In \cref{sec:programmable-DRI}, we show how MC-EIF can be used to automatically construct a variety of efficient estimators for a broad class of models and functionals, avoiding the need for complex manual and error-prone derivations.

In summary, we show that: (i) MC-EIF provides accurate estimates of the true EIF, enabling efficient estimation, and (ii) MC-EIF is very general, applying to many functionals and models that can be written as probabilistic programs\footnote{We make these conditions precise in \cref{sec:MC-EIF}.}.


\textbf{Outline.} We review related work in \cref{sec:related-work} and provide background in \cref{sec:problem-statement}. In \cref{sec:MC-EIF}, we introduce MC-EIF and provide a non-asymptotic error bound on the quality of our approximation. We show how estimators using MC-EIF achieve the same asymptotic guarantees as using analytic EIFs in \cref{sec:programmable-DRI}. In \cref{sec:experiments}, we show empirically that MC-EIF produces more accurate estimates of the EIF than existing automated approaches, and using MC-EIF as a drop-in replacement for the analytic EIF does not degrade estimation accuracy in a variety of benchmarks.


\subsection{Related Work}
\label{sec:related-work}

Influence function-based estimators have a rich history in the statistics and machine learning literature~\cite{efficient_book,  semi-theory-book, semiparametric_review, review_demystifying}. Despite their effectiveness, these methods have historically required custom and complex mathematical analysis for specific combinations of models and functional. Even an incomplete survey of the influence function-based estimation literature yields a large collection of complex scenario-specific research. We enumerate some of this research in~\cref{A:Survey}. Importantly, our work does not aim to introduce novel efficient estimators; instead it aims to lower the mathematical burden for practitioners who wish to use existing influence function-based efficient estimator templates (see \cref{sec:programmable-DRI}) with custom models and/or functionals.

Our work is not the first to attempt to automate and generalize computations for robust statistical estimation. Perhaps the closest technique we are aware of is approximating the influence function using finite differences on kernel-smoothed empirical distributions~\cite{automated_if, computer-eif, empirical-gateaux}. We provide a thorough comparison with this method in \cref{sec:experiments}, demonstrating how MC-EIF automates and scales better to high dimensional problems, exactly the settings where efficient estimation is most useful. Recent work has made progress towards general efficient estimators, but still impose strong restrictions on models and/or functionals. DML-ID~\cite{double-graph} extends double machine learning to nonparametric causal graphs and marginal density under intervention functionals. Similarly, the kernel debiased plugin estimator~\cite{kernel-debiased} implements a version of TMLE that bypasses the influence function computations for models defined in a RKHS. Finally, other recent work \cite{auto-dml,orth_stat_learn, double-nn, double-rf} approximates the efficient influence function using a meta-regression procedure for generalized method-of-moment estimators.

Our work on influence functions for estimation should not be confused with other recent work~\cite{empirical-influence, infinitesimal-jackknife} automating numerical computations on what is often called the \textit{empirical influence function}. The empirical influence function targets the effect of perturbing individual training points in a machine learning model, and not the effect of perturbing an estimated distribution as the EIF does. While we take inspiration from this prior work, and borrow similar techniques, it is not directly applicable to the robust estimation tasks we take on in this paper.

\section{Problem Statement}
\label{sec:problem-statement}

We begin with an overview of semiparametric statistics, highlighting its role in developing optimal estimators via influence functions. Then, we address the practical challenges in constructing general semiparametric estimators.

\subsection{Overview of Semiparametric Statistics}


\textbf{General Problem.} We consider the problem of estimating some estimand $\theta^* \in \R^L$, where $L \in \mathbb{N}$ denotes the dimension of the target quantity. Typically, we can express $\theta^* = \Psi(\ptrue)$ for some known functional $\Psi$, where $\Psi$ maps a probability distribution to a vector in $\R^L$, and $\ptrue(x)$ denotes the true data-generating distribution over some vector of observables $x \in \R^D$, $D \in \mathbb{N}$. Many estimands involve estimating high-dimensional \emph{nuisance} parameters, or quantities of no immediate value to the analyst. For example, when estimating the average treatment effect, one might need to adjust for high-dimensional confounders.

%

\textbf{Semiparametric Solution.} Semiparametric statistics provides a mathematical framework for optimally estimating $\theta^*$ in the presence of potentially complex, high-dimensional nuisance parameters.
A standard way to estimate $\theta^*$ is with the \emph{plug-in approach}; construct an estimate $\hat{\mathbb{P}}$ of $\ptrue$ and report $\hat{\theta} = \psi(\hat{\mathbb{P}})$. Unfortunately, the plug-in approach can lead to provably sub-optimal estimates of $\theta^*$ due to poor estimates of $\mathbb{P}^*$~\cite{semi-theory-book,dml,orth_stat_learn}. Instead, a general recipe for efficiently estimating $\theta^*$ from finite data $\{x_n\}_{n=1}^N$, where $x_n \overset{\text{iid}}{\sim} \mathbb{P}^*(x)$ for $n=1, \cdots, N$, is given by the following three-steps: (i) use $\frac{N}{2}$ samples to construct an initial estimate $\hat{\mathbb{P}}$ of $\ptrue$, (ii) compute the influence function (to be defined shortly) of $\Psi$ at the estimate $\hat{\mathbb{P}}$, and (iii) evaluate the influence function at the held out $\frac{N}{2}$ datapoints to derive a corrected estimate.\footnote{There are some small variations to this three-step recipe such as using cross-fitting~\cite{dml} instead of a simple equal split of the data; see also \cref{sec:programmable-DRI}.} In \cref{sec:programmable-DRI}, we elaborate on how influence functions are used to construct several popular efficient estimators.

\subsection{Influence Functions}

 A central premise of this paper is that to automate efficient estimation, it suffices to automate the computation of \emph{influence functions}, which can be thought of as gradients in function space. We make this precise below.
\begin{definition} \label{def:gateaux} (Gateaux derivative)
Consider the $\epsilon$-perturbed probability distribution $\mathbb{P}_{\epsilon} \coloneqq (1 - \epsilon)\mathbb{P} + \epsilon \mathbb{Q} = \mathbb{P} + \epsilon(\mathbb{Q} - \mathbb{P})$, where $\mathbb{Q}$ is some probability distribution. $\Psi$ is \emph{Gateaux} differentiable at $\mathbb{Q}$ if the following limit exists:
\begin{equation*}
	 \left.\frac{d}{d\epsilon} \Psi(\mathbb{P}_{\epsilon}) \right \vert_{\epsilon = 0} = \lim_{\epsilon \rightarrow 0} \frac{\Psi(\mathbb{P}_{\epsilon}) - \Psi(\mathbb{P})}{\epsilon}.
\end{equation*}
\end{definition}
The Gateaux derivative can be viewed as a generalization of the directional derivative from ordinary differential calculus; it characterizes how much a functional changes at a particular point $\mathbb{P}$ in the direction $\mathbb{Q} - \mathbb{P}$. 
\begin{definition} \label{def:if} (Influence function) Suppose there exists a square integrable function $\varphi \in L^2(\mathbb{P})$ such that
\begin{equation*}
		\left.\frac{d}{d\epsilon} \Psi(\mathbb{P}_{\epsilon}) \right \vert_{\epsilon = 0} = \inner{\varphi}{\mathbb{Q} - \mathbb{P}}_{L^2} 
					= \int_{x \in \R^D} \varphi(x) d(\mathbb{Q} - \mathbb{P})(x)
\end{equation*}
for all $\mathbb{Q} \in \mathcal{M}$ and $\E_{x \sim \mathbb{P}(x)}[\varphi(x)] = 0$, where $\mathcal{M}$ denotes some space of probability distributions. Then, $\varphi$ is called an influence function for $\Psi$ at $\mathbb{P}$.

\end{definition}
An influence function is a re-centered "functional gradient" in $L^2(\mathbb{P})$: just as the Euclidean inner product between the gradient of a function and a vector yields the directional derivative in ordinary differential calculus, the $L^2(\mathbb{P})$ inner product between the influence function and perturbation $\mathbb{Q} - \mathbb{P}$ yields the Gateaux directional derivative. Unlike in ordinary differential calculus, however, influence functions are not always unique~\cite{semi-theory-book,kennedy-semi-emp-review}, and some may lead to higher asymptotic variance estimators than other. We call the optimal influence function (i.e., the one that minimizes asymptotic variance), the \emph{efficient influence function} (EIF). When the EIF exists, it is $\mathbb{P}$ almost everywhere unique, and found through a Hilbert space projection onto what is known as the nuisance tangent space. We defer details to \citet{semi-theory-book} and \citet{kennedy-semi-emp-review}. 

\subsection{The Problem: Solving Integral Equations is Hard}

As the influence function in \cref{def:if} is defined implicitly as a solution to an infinite set of integral constraints over $\mathcal{M}$, it is often hard to find. Entire papers have been written to analytically derive influence functions; see, for example, the papers listed in \cref{A:Survey}. For even experts in machine learning and statistics, such derivations are out-of-reach, time consuming, and error prone. 




\section{Monte Carlo Efficient Influence Function}
\label{sec:MC-EIF}

Much of the work in semiparametric statistics and efficient estimation has focused on scenarios where the nuisance function is modeled nonparametrically~\cite{semi-theory-book,tmle, dml, semiparametric_review}. However, practitioners often use high-dimensional parametric models such as generalized linear models, neural networks, and tensor splines in practice due to their flexibility and ability to scale to large datasets. Due to the richness of these high-dimensional spaces, inference is still statistically challenging and  benefits from efficient estimation; see, for example, Table 1 in \citet{dml}. Specifically, in contrast to traditional low-dimensional parametric models where maximum likelihood estimation is typically efficient~\cite{fisher-classic,cramer-bound}, high-dimensional parametric models often exhibit distinct asymptotic behaviors~\cite{,vaart-stats, karoui-high-dim-regression, sparsity-book}. In these high-dimensional models, estimates may converge   slower than classic $O_p(\frac{1}{\sqrt{N}})$ rates without the application of efficient inference methods~\cite{tmle,dml,semiparametric_review}.

A key question we address is whether using a high-dimensional parametric model simplifies the process of solving \cref{def:if}. We show that it does in \cref{sec:theo_eif}.

\subsection{The EIF in Parametric Models} \label{sec:theo_eif}
\textbf{Notation.} We let $\phi \in \Phi \subset \R^p$ denote a finite-dimensional parameter specifying a distribution on the observed random variables $x \in \R^D$ for $p < \infty$, $p \in \mathbb{N}$. $p_{\phi}(x)$ corresponds to a distribution in this space,  and $p_{\phi^*}(x)$ the true distribution, or the one closest to the true data-generating distribution in Kullback–Leibler distance. We let $\psi(\phi)$ denote a function $\R^p \mapsto \R^L$ that equals the evaluation of the functional $\Psi(p_{\phi})$ for all $\phi \in \Phi$. Under mild differentiability assumptions, we provide the analytic formula for the EIF in \cref{thm:eif_fn}.
	
\begin{assumption} \label{assum:cont_diff_prob}
	For any $x \in \R^D$, $p_{\phi}(x)$ is a continuous and differentiable probability density function of $\phi$.\footnote{This assumption does not require $x$ to be continuous, only that the distribution $p_{\phi}(x)$ be continuous in parameter space.}
\end{assumption}

\begin{assumption} \label{assum:cont_diff_func}
	$\psi(\phi)$ is a continuous and differentiable function of $\phi$.
\end{assumption}

\begin{assumption} \label{assum:fisher_inv} The Fisher information $I(\phi)$ is invertible, where
$I(\phi) \coloneqq \E_{x \sim p_{\phi}(x)} [\nabla_{\phi} \log p_{\phi}(x) \nabla_{\phi} \log p_{\phi}(x)^T]$.
\end{assumption}

\begin{theorem} \label{thm:eif_fn} (Theorem 3.5 in \citet{semi-theory-book})
	Suppose  \cref{assum:cont_diff_prob}, \cref{assum:cont_diff_func}, and \cref{assum:fisher_inv} hold. Then, the efficient influence function $\varphi_{\phi}(\tilde{x})$ at $\phi$ evaluated at the point $\tilde{x} \in \mathbb{R}^D$ equals
\begin{equation} \label{eq:eif}
	\begin{split}
   		& [\nabla_{\phi} \psi(\phi)]^T \inv{I(\phi)} \nabla_{\phi} \log p_{\phi}(\tilde{x}).
   \end{split}
\end{equation}
\end{theorem}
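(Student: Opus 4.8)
The plan is to reduce the infinite system of integral constraints in \cref{def:if} to a finite-dimensional problem by exploiting the parametric structure, and then recognize the resulting object as a classical object from parametric efficiency theory. First I would restrict attention to perturbations $\mathbb{Q}$ that stay within the model, i.e.\ $\mathbb{Q} = p_{\phi + t v}$ for small $t$ and an arbitrary direction $v \in \R^p$; this is the standard device of embedding a parametric submodel. For such a path, the chain rule gives $\left.\frac{d}{d\epsilon}\Psi(\mathbb{P}_\epsilon)\right|_{\epsilon=0} = \nabla_\phi \psi(\phi)^T v$ (using \cref{assum:cont_diff_func}), while the right-hand side of \cref{def:if} becomes $\inner{\varphi_\phi}{\text{score along }v}_{L^2(p_\phi)}$, where the relevant score direction is $s_v(x) = v^T \nabla_\phi \log p_\phi(x)$ (using \cref{assum:cont_diff_prob} to differentiate under the integral sign). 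So the influence function must satisfy $\E_{p_\phi}[\varphi_\phi(x)\, \nabla_\phi \log p_\phi(x)^T] = \nabla_\phi \psi(\phi)$, together with the centering condition $\E_{p_\phi}[\varphi_\phi(x)] = 0$.

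Next I would produce the candidate. Because the model is parametric and the influence function is required to lie in $L^2(p_\phi)$, it is natural to look for $\varphi_\phi$ in the span of the score components, i.e.\ $\varphi_\phi(x) = C\, \nabla_\phi \log p_\phi(x)$ for some matrix $C \in \R^{L \times p}$. The centering condition is then automatic since scores are mean zero. Substituting into the moment equation above gives $C\, \E_{p_\phi}[\nabla_\phi \log p_\phi(x)\nabla_\phi \log p_\phi(x)^T] = C\, I(\phi) = \nabla_\phi \psi(\phi)$, and by \cref{assum:fisher_inv} we solve $C = \nabla_\phi \psi(\phi)\, I(\phi)^{-1}$. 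This yields exactly \cref{eq:eif}. One then verifies directly that this $\varphi_\phi$ satisfies \cref{def:if} for every in-model $\mathbb{Q}$, and more generally (by a projection argument) that it is the influence function of minimal $L^2(p_\phi)$ norm, hence the EIF: any other candidate $\tilde\varphi_\phi$ satisfying the moment equations differs from $\varphi_\phi$ by a function orthogonal to the score span, so $\|\tilde\varphi_\phi\|^2 = \|\varphi_\phi\|^2 + \|\tilde\varphi_\phi - \varphi_\phi\|^2 \ge \|\varphi_\phi\|^2$.

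The main obstacle, and the place where the argument needs care, is justifying the passage from the full Gateaux-derivative definition over all $\mathbb{Q} \in \mathcal{M}$ (mixture perturbations $\mathbb{P}_\epsilon = (1-\epsilon)\mathbb{P} + \epsilon\mathbb{Q}$) to the parametric-submodel characterization in terms of scores: one must check that the tangent set generated by mixture paths agrees, for the purpose of identifying the projection, with the linear span of the parametric scores, and that the regularity in \cref{assum:cont_diff_prob,assum:cont_diff_func} is enough to exchange differentiation and integration and to guarantee the candidate lies in $L^2(p_\phi)$. Since the statement is quoted from \citet{semi-theory-book} (Theorem 3.5), I would either cite their treatment for this step or, if a self-contained argument is wanted, invoke the standard fact that for a smooth parametric model the nuisance tangent space is the full score span, so the efficient influence function is the ordinary (inverse-Fisher-weighted) one — essentially the parametric Cramér–Rao / delta-method computation. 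The remaining steps — the chain rule, solving the linear system, and the orthogonality/minimal-norm check — are routine linear algebra and Hilbert-space projection.
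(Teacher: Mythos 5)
Your proposal is essentially correct: the paper itself gives no proof of this statement, quoting it directly as Theorem 3.5 of \citet{semi-theory-book}, and your sketch reproduces that reference's standard argument --- derive the moment condition $\E_{x \sim p_{\phi}}[\varphi_{\phi}(x)\,\nabla_{\phi}\log p_{\phi}(x)^T]=\nabla_{\phi}\psi(\phi)$ from score paths, solve for the unique solution lying in the span of the score components using the invertibility of $I(\phi)$ (\cref{assum:fisher_inv}), and certify efficiency by the orthogonality/minimal-norm projection argument. The one delicate point, which you correctly flag rather than gloss over, is the passage from the mixture-type Gateaux perturbations of \cref{def:gateaux,def:if} to the parametric-submodel/score characterization (interchanging differentiation and integration, and identifying the tangent set with the linear span of the scores); deferring that regularity bookkeeping to the cited treatment is appropriate here.
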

While \cref{eq:eif} has been around for many decades, it has mainly been used as a theoretical tool for mathematical statisticians. In particular, \cref{eq:eif} is typically evaluated at the true data generating parameter $\phi^*$ to characterize the theoretical asymptotic variance of an estimator. In other instances, it is used to derive approximate confidence intervals; see, for example, Chapter 3 in \citet{semi-theory-book}. In the following Sections, we discuss how \cref{eq:eif} provides a key ingredient in automating efficient estimation in high-dimensional parametric models.

\subsection{Numerically Approximating the EIF} \label{sec:monte_eif}

Given a model $p_{\phi}(\cdot)$ and functional $\Psi(\cdot)$, we seek to automatically compute \cref{eq:eif}. Our \emph{Monte Carlo efficient influence function} (MC-EIF) estimator achieves this automation by replacing $\psi(\phi)$ and $I(\phi)$, which are typically unknown, with stochastic approximations $\hat{\psi}_M(\phi)$ and $\hat{I}_M(\phi)$ computed from $M \in \mathbb{N}$ Monte Carlo samples:
\begin{equation} \label{eq:mceif}
    \hat{\varphi}_{\phi, M}(\tilde{x}) \coloneqq [\nabla_{\phi} \hat{\psi}_M(\phi)]^T \inv{\hat{I}_M(\phi)} \nabla_{\phi} \log p_{\phi}(\tilde{x}).
\end{equation}
Here, we show that \cref{eq:mceif} leads to an automated and accurate approach to numerically computing EIFs using only quantities provided by existing AD and PPL systems.

\textbf{Approximating \bm{$\inv{\hat{I}_M(\phi)} \nabla_{\phi} \log p_{\phi}(\tilde{x})$}.} We draw $x_m \overset{\text{iid}}{\sim} p_{\phi}(x), 1 \leq m \leq M$ for $M \in \mathbb{N}$, and let
\begin{equation} \label{eq:emp_fisher}
    \begin{split}
    \hat{I}_M(\phi) &= \frac{1}{M} \sum_{m=1}^{M} \nabla_{\phi} \log p_{\phi}(x_m) \nabla_{\phi} \log p_{\phi}(x_m)^T.
    \end{split}
\end{equation}
A naive approach for computing $\inv{\hat{I}_M(\phi)} \nabla_{\phi} \log p_{\phi}(\tilde{x})$ is calculating the full $p \times p$ matrix in \cref{eq:emp_fisher}, inverting it, and then taking its product with the score vector $\nabla_{\phi} \log p_{\phi}(x_m)^T \in \R^p$ computed from AD. This naive approach takes $O(Mp^2 + p^3)$ time and $O(p^2)$ memory which might be too expensive for large $p$. Instead, we exploit AD and numerical linear algebra techniques to avoid explicitly storing and inverting the approximate Fisher information matrix, similar to~\citet{empirical-influence}. Suppose for the moment that we have a black-box method to compute Fisher vector products $\hat{I}_M(\phi) v$ for arbitrary vectors $v \in \R^p$. Then, we could use the conjugate gradient algorithm to iteratively find $\inv{\hat{I}_M(\phi) \nabla_{\phi}} \log p_{\phi}(\tilde{x})$, where the cost of each conjugate gradient step is determined by the cost to compute $\hat{I}_M(\phi) v$. While the number of conjugate gradient steps needs to be $p$ for an exact inverse, often far fewer iterations are required for a close approximate solution~\cite{exact-gp}. To make computing $\hat{I}_M(\phi) v$ efficient, we collect the $M$ simulated datapoints in the matrix $X_M \in \R^{M \times D}$ and let
\begin{equation*}
    \log p_{\phi}(X) \coloneqq (\log p_{\phi}(x_1), \cdots, \log p_{\phi}(x_M))^T \in \R^M.
\end{equation*}
Then, $\hat{I}_M(\phi) \nabla_{\phi} v$ equals
\begin{equation}
   \left[\frac{1}{M} J_M^T J_M\right] v = \left[\frac{1}{M} J_M^T \right] \left[J_M v \right],
\end{equation}
where $J_M = \nabla_{\phi} \log p_{\phi}(X_M) \in \R^{M \times p}$ is the Jacobian matrix. We use \emph{Pearlmutter's trick} to avoid computing the entire Jacobian matrix~\cite{perlmutter-trick}. In particular, this method allows us to compute the Jacobian vector product $v_M = \left[J_M v \right] \in \R^M$ in time proportional to a single evaluation of $\log p_{\phi}(X)$ and $O(M + p)$ memory. Similarly, we use the vector Jacobian product to compute $J_M^Tv_M$.

\textbf{Approximating \bm{$\nabla_{\phi} \hat{\psi}_M(\phi)$}.} Robust estimation with MC-EIF does not require exact gradients. Instead, it only requires access to a sequence of gradient estimators $\{\grad_{\phi} \hat{\psi}_m(\phi) \}_{m=1}^{\infty}$ of $\nabla_{\phi} \psi(\phi)$ whose error can be bounded above by some $\Delta_m > 0$, where the $M$th iterate $\nabla_{\phi} \hat{\psi}_M(\phi)$ is used in \cref{eq:mceif}.\footnote{In \cref{sec:theo_monte}, we require that $\Delta_m = o\left(\sqrt{\frac{p\log p}{m}}\right)$.} Using a sequence of estimators in this way guarantees that the approximation error of  \cref{eq:mceif} is not dominated by {$\nabla_{\phi} \hat{\psi}_M(\phi)$. In practice, the target functional $\psi(\theta)$ might be quite complex, making gradient estimation challenging. For example, it might involve taking expectations with respect to conditionals of $p_{\phi}(x)$, or be defined implicitly as a solution to an optimization problem as in \citet{empirical-gateaux}.

One particularly simple and general way to address this challenge is to implement a Monte Carlo estimator of $\psi$ that can be transformed via automatic differentiation into an efficient Monte Carlo estimator for its gradient, a well-understood problem that is beyond the scope of this paper to review. We note that for the very wide class of functionals that can be written as nested expectations, recent work \cite{rainforth2018nesting,syed2023optimal,lew2023adev} gives formal statements of smoothness assumptions and theoretical results sufficient to obtain the oracle rate $\Delta_m$ in terms of numbers of samples, as well as algorithms that can be implemented using popular automatic differentiation software packages like PyTorch \cite{pytorch}.\footnote{For example code snippets of functionals, see \cref{A:code_ex}.}

\subsection{Theoretical Guarantees for MC-EIF} \label{sec:theo_monte}

We conclude by deriving a non-asymptotic error bound for how well \cref{eq:mceif} approximates \cref{eq:eif}. For fixed input dimension $D$ and model sizes $p$, \cref{eq:mceif} converges to \cref{eq:eif} at a $O_p(1/\sqrt{M})$ rate by the Law of Large Numbers. As we are interested in high-dimensional parametric families, we analyze the behavior of our approximation as a function of both input dimension $D$ and model size $p$. To prove our result, we use standard tools and assumptions from empirical process theory such as the requirement of sub-Gaussian tails~\cite{vaart-stats}.

\begin{assumption} \label{assum:subgaus}
Suppose  $x \sim p_{\phi}(x)$. Then,
there exists a universal constant $C_1$, $0 < \sigma_1 < \infty$, such that the normalized score vector $\tilde{x} \coloneqq \frac{1}{\sqrt{D}} \nabla_{\phi} \log p_{\phi}(x)$ is a sub-Gaussian random vector with parameter $C_1$.
\end{assumption}

As $\E[ \|\nabla_{\phi_j} \log p_{\phi}(x)\|^2_2] = O(D)$, $1 \leq j \leq p$, the division by $\sqrt{D}$ in \cref{assum:subgaus} ensures that the variance of the score does not grow unboundedly as $D \rightarrow \infty$. Thus, our assumption that $\tilde{x}$ is sub-Gaussian is mild. \cref{assum:bound_grad} below ensures that the functional and score are smooth enough by bounding their gradients. 

\begin{assumption} \label{assum:bound_grad}
There exist universal constants $C_2, C_3 < \infty$ such that $\|\nabla_{\phi} \psi(\phi) \|_F < C_2$ and $\left \|\frac{\nabla_{\phi} \log p_{\phi}(x^*)}{D} \right \|_2 < C_3$ for any $x^* \in \R^D$, for any $p$ and $D$.
\end{assumption}
Unlike our Monte Carlo approximation to the Fisher information matrix, we do not assume a particular type of  estimator for $\grad_{\phi} \psi(\phi)$. To prove convergence of MC-EIF to the true EIF, we assume that $\grad_{\phi} \hat{\psi}_M(\phi)$ converges to $\grad_{\phi} \psi(\phi)$ at the following rate:
\begin{assumption} \label{assum:conv_functional}
Let $\delta_M \coloneqq \grad_{\phi} \psi(\phi) -  \grad_{\phi} \hat{\psi}_M(\phi) \in \R^{L \times p}$ denote the approximation error. There exists a universal constant $C_{\psi} < \infty$ such that for $M > C_{\psi}$ and for any $\epsilon > 0$,
\begin{equation*}
    \mathbb{P}\left( \|\delta_M\|_F > \sqrt{\frac{p \log p + \epsilon}{M}} \right) < \exp(\text{-}\epsilon),
\end{equation*}
and
\begin{equation*}
    \mathbb{P}\left( \|\grad_{\phi} \hat{\psi}_M(\phi)\|_F > C_2 \right) < \exp(\text{-}\epsilon),
\end{equation*}
hold, where $C_2$ is defined in \cref{assum:bound_grad}.
\end{assumption}
In \cref{A:proof_mceif_convg}, we prove that Monte Carlo estimators of $\grad_{\phi} \psi(\phi)$ with gradient clipping \cite{grad-clip} satisfy \cref{assum:conv_functional}. Hence, \cref{assum:conv_functional} is a mild condition. Under these three assumptions, and the ones in \cref{thm:eif_fn}, we prove the following result in \cref{A:proof_mceif_convg}.

\begin{theorem} \label{thm:monte_eif_convg}
	Suppose \cref{assum:cont_diff_prob}, \cref{assum:cont_diff_func}, \cref{assum:fisher_inv}, \cref{assum:subgaus}, \cref{assum:bound_grad}, and \cref{assum:conv_functional} hold. Then, there exists universal constants $0 < C_4$ and $C_5 < \infty$, such that for any $\epsilon > 0$ and $M > \max(C_5(p + \epsilon)C_1^2, C_{\psi})$, 
\begin{equation}
    |\varphi_{\phi}(x^*) - \hat{\varphi}_{\phi, M}(x^*)| \leq C_4 \lambda_{\text{max}}(\inv{\Sigma})\sqrt{\frac{p \log p  + \epsilon}{M}},
\end{equation}
for $x^* \in \R^D$ with probability at least $1 - 2\exp(-\epsilon)$, where $\Sigma \coloneqq \text{cov}(\tilde{x})$ and $\lambda_{\text{max}}(\cdot)$ denotes the largest eigenvalue of a matrix.
\end{theorem}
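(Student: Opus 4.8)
The plan is to split the error into a contribution from the functional-gradient estimator and a contribution from the empirical Fisher information, bound each on a common high-probability event, and union bound. Writing $s \coloneqq \grad_\phi \log p_\phi(x^*)$ and adding and subtracting $[\grad_\phi \hat\psi_M(\phi)]^T \inv{I(\phi)}\, s$,
\[
\varphi_\phi(x^*) - \hat\varphi_{\phi,M}(x^*) = \underbrace{\delta_M\, \inv{I(\phi)}\, s}_{T_1} + \underbrace{[\grad_\phi \hat\psi_M(\phi)]^T\big(\inv{I(\phi)} - \inv{\hat I_M(\phi)}\big) s}_{T_2}.
\]
Under \cref{assum:cont_diff_prob} the score has mean zero, so $\Sigma = \text{cov}(\tilde x) = \E[\tilde x \tilde x^T]$ and, by the normalization in \cref{assum:subgaus}, $I(\phi) = D\,\Sigma$ and $\hat I_M(\phi) = D\,\hat\Sigma_M$ with $\hat\Sigma_M \coloneqq \frac{1}{M}\sum_{m=1}^M \tilde x_m \tilde x_m^T$. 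In particular $\|\inv{I(\phi)}\|_{\text{op}} = \tfrac{1}{D}\lambda_{\text{max}}(\inv\Sigma)$, and \cref{assum:bound_grad} gives $\|s\|_2 \le D C_3$, so the factors of $D$ cancel in both $T_1$ and $T_2$.

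For $T_1$, submultiplicativity gives $|T_1| \le \|\delta_M\|_F\, \|\inv{I(\phi)}\|_{\text{op}}\, \|s\|_2 \le C_3\, \lambda_{\text{max}}(\inv\Sigma)\, \|\delta_M\|_F$, and \cref{assum:conv_functional} bounds $\|\delta_M\|_F \le \sqrt{(p\log p + \epsilon)/M}$ with probability at least $1 - e^{-\epsilon}$. This term already has the stated form.

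For $T_2$, the crux is a perturbation bound for the inverse empirical Fisher information. On the event $\{\|\grad_\phi \hat\psi_M(\phi)\|_F < C_2\}$ (also from \cref{assum:conv_functional}, probability $\ge 1 - e^{-\epsilon}$), after the $D$'s cancel $|T_2| \le C_2 C_3 \|\inv\Sigma - \inv{\hat\Sigma_M}\|_{\text{op}}$. I would use the resolvent identity $\inv\Sigma - \inv{\hat\Sigma_M} = \inv\Sigma(\hat\Sigma_M - \Sigma)\inv{\hat\Sigma_M}$ together with standard operator-norm concentration for sample covariances of sub-Gaussian vectors (\cref{assum:subgaus}; see, e.g., \citet{vaart-stats}): $\|\hat\Sigma_M - \Sigma\|_{\text{op}} \lesssim C_1^2\big(\sqrt{(p+\epsilon)/M} + (p+\epsilon)/M\big)$ with probability $\ge 1 - e^{-\epsilon}$. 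The sample-size threshold $M > C_5(p+\epsilon)C_1^2$ makes the right-hand side $\lesssim \sqrt{(p+\epsilon)/M}$ and, for $C_5$ large enough, smaller than $\tfrac12\lambda_{\text{min}}(\Sigma)$, so Weyl's inequality yields $\lambda_{\text{min}}(\hat\Sigma_M) \ge \tfrac12\lambda_{\text{min}}(\Sigma)$ and hence $\|\inv{\hat\Sigma_M}\|_{\text{op}} \le 2\lambda_{\text{max}}(\inv\Sigma)$. Collecting terms gives $|T_2| \lesssim \lambda_{\text{max}}(\inv\Sigma)^2 \sqrt{(p+\epsilon)/M}$, collapsing to a single power of $\lambda_{\text{max}}(\inv\Sigma)$ if one instead invokes the multiplicative concentration bound $(1-\eta)\Sigma \preceq \hat\Sigma_M \preceq (1+\eta)\Sigma$ with $\eta \asymp \sqrt{(p+\epsilon)/M}$.

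Finally I would union bound over these (at most three) events, absorbing a constant into $\epsilon$ so the failure probability is at most $2e^{-\epsilon}$, add $|T_1| + |T_2|$, use $\sqrt{(p+\epsilon)/M} \le \sqrt{(p\log p + \epsilon)/M}$, and fold $C_1, C_2, C_3$ and the numerical constants into a single $C_4$. The step I expect to be the main obstacle is precisely this last probabilistic ingredient: establishing, uniformly in the high-dimensional regime, that $\hat I_M(\phi)$ is invertible with $\|\inv{\hat I_M(\phi)}\|_{\text{op}}$ controlled by $\lambda_{\text{max}}(\inv\Sigma)$ — i.e., the sub-Gaussian sample-covariance concentration plus the Weyl argument that calibrates the sample-size threshold $C_5(p+\epsilon)C_1^2$. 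Everything else reduces to deterministic matrix algebra and the black-box rate supplied by \cref{assum:conv_functional}.
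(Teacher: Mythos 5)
Your decomposition and treatment of the $\delta_M$ term are exactly the paper's: the proof in \cref{A:proof_mceif_convg} splits the error into $[\nabla_{\phi}\hat{\psi}_M(\phi)]^T(\inv{\Sigma} - \inv{\hat{\Sigma}})\frac{\nabla_{\phi}\log p_{\phi}(x^*)}{D}$ plus $\delta_M \inv{\Sigma}\frac{\nabla_{\phi}\log p_{\phi}(x^*)}{D}$, bounds the latter by $C_3\lambda_{\text{max}}(\inv{\Sigma})\|\delta_M\|_F$ and invokes \cref{assum:conv_functional}, just as you do for $T_1$. Where you genuinely diverge is the inverse-Fisher term: the paper does not go through operator-norm concentration of $\hat{\Sigma}$ at all, but instead applies a black-box bilinear-form concentration result (Theorem 10 of the cited covariance-bounds reference) that controls $u^T(\inv{\Sigma}-\inv{\hat{\Sigma}})v$ directly in terms of the Orlicz norms $\|u^T\inv{\Sigma}\tilde{x}\|_{\psi_2}\|v^T\inv{\Sigma}\tilde{x}\|_{\psi_2}$, which it then bounds by $C_1C_2C_3\lambda_{\text{max}}(\inv{\Sigma})$ using \cref{assum:subgaus} and \cref{assum:bound_grad}; the sample-size condition $M > C_5(p+\epsilon)C_1^2$ and the single power of $\lambda_{\text{max}}(\inv{\Sigma})$ are inherited from that theorem. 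Your route — resolvent identity plus sub-Gaussian sample-covariance concentration plus Weyl — is more elementary and makes the role of the threshold transparent (it guarantees $\hat{\Sigma}$ stays well-conditioned), but as you correctly flag it yields $\lambda_{\text{max}}(\inv{\Sigma})^2$ unless you upgrade to the multiplicative bound $(1-\eta)\Sigma \preceq \hat{\Sigma} \preceq (1+\eta)\Sigma$; that upgrade requires the sub-Gaussianity of the whitened vector $\Sigma^{-1/2}\tilde{x}$, i.e.\ reading \cref{assum:subgaus} as norm-equivalence relative to the covariance, which is in fact also how the paper uses it. Two bookkeeping caveats if you flesh this out: your union is over three events, so you get $1-3\exp(-\epsilon)$ rather than $1-2\exp(-\epsilon)$ unless you rescale $\epsilon$ (the paper's own accounting is similarly loose), and $\sqrt{(p+\epsilon)/M}\le\sqrt{(p\log p+\epsilon)/M}$ only holds once $\log p\ge 1$, a triviality absorbed into constants.
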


\cref{thm:monte_eif_convg} tells us that $M$ needs to scale linearly with $p \log p$ to guarantee close pointwise approximation. 

\subsection{An EIF Cookbook}

As a generalization of the gradient operator on ordinary functions, the EIF viewed as an operator on functionals can be shown to have a number of convenient algebraic properties \citep{kennedy-semi-emp-review, semiparametric_review}, many of which are inherited directly by the MC-EIF estimator. In \cref{A:eif_cookbook}, we highlight several ways in which these properties can be used to extend the basic MC-EIF framework (and the MC-EIF-based robust estimators in the next section) to new classes of models and functionals, significantly increasing the range of practical use cases addressable by an implementation of MC-EIF in a differentiable probabilistic programming language like Pyro \cite{pyro}.

\section{MC-EIF for Automated Efficient Inference}
\label{sec:programmable-DRI}

In \cref{thm:monte_eif_convg}, we proved that MC-EIF is close to the true efficient influence function pointwise. In this Section we; (i) show how MC-EIF can be used to automate the construction of popular efficient estimators, and (ii) prove how many Monte Carlo samples are needed to ensure that key statistical properties hold when MC-EIF is used instead of the true EIF. In doing so, MC-EIF brings conceptual clarity to the practice of constructing efficient estimators, and how these estimators can be implemented using existing differentiable probabilistic programming languages. 

All three of the efficient estimator templates we explore in this Section involve some combination of plug-in estimation and EIF-based computations. A key practical benefit of our work is that MC-EIF-based efficient estimators are entirely modular; advances in general-purpose probabilistic inference technology directly translate to advances in efficient estimation under our framework.

\subsection{Von Mises One Step Estimator}
We start with the simple \textit{Von Mises One Step Estimator}, which corrects the plug-in estimator in \cref{sec:problem-statement} by adding the average value of the efficient influence function on a held out dataset. Despite its simplicity, this estimator achieves optimal statistical rates~\cite{semiparametric_review}. Our one step estimator using MC-EIF ($\hat{\varphi}_{\phi, M}(x)$) instead of the true efficient influence function ($\varphi_{\phi}(x)$) is provided in \cref{alg:one_step}.

\textbf{Theoretical Guarantees.} We call the one step estimator that uses the true efficient influence function instead of MC-EIF in \cref{alg:one_step} the \emph{analytic one step estimator}. Below we prove how many Monte Carlo samples are needed to ensure our estimator for finite $M$ has the same statistical properties as the analytic one step estimator.
\begin{proposition} \label{prop:one_step_convg}
Suppose that the assumptions in \cref{thm:monte_eif_convg} hold. Further, suppose $M = O(Np \log p)$ and $p > O(\log{N})$. Then, \cref{alg:one_step} converges to $\theta^*$ at the same rate as the analytic one-step estimator.
\end{proposition}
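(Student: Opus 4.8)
Write $\hat\phi$ for the parameter fitted on the first half of the sample (so that $\hat{\mathbb{P}} = p_{\hat\phi}$), let $\mathcal H$ denote the held-out half ($|\mathcal H| = N/2$), and let $\hat\theta_{\mathrm{MC}}$ be the output of \cref{alg:one_step} and $\hat\theta_{\mathrm{an}}$ the analytic one step estimator built from the \emph{same} split and the \emph{same} $\hat\phi$, but using $\varphi_{\hat\phi}$ in place of $\hat\varphi_{\hat\phi,M}$. The two estimators differ only in the correction term, so
$$
\hat\theta_{\mathrm{MC}} - \theta^* \;=\; (\hat\theta_{\mathrm{an}} - \theta^*) \;+\; R_{M,N},\qquad R_{M,N} \;\coloneqq\; \frac{2}{N}\sum_{n\in\mathcal H}\bigl(\hat\varphi_{\hat\phi,M}(x_n) - \varphi_{\hat\phi}(x_n)\bigr).
$$
Since the analytic one step estimator attains the $O_p(N^{-1/2})$ rate under the stated assumptions \cite{semiparametric_review}, it suffices to show $R_{M,N} = O_p(N^{-1/2})$, i.e.\ that the extra error from substituting MC-EIF is no larger than the intrinsic sampling error.

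The plan for bounding $R_{M,N}$ is to condition on the first half of the data (hence on $\hat\phi$) and on $\{x_n\}_{n\in\mathcal H}$, leaving only the $M$ Monte Carlo draws as randomness, and then apply \cref{thm:monte_eif_convg} pointwise. By the triangle inequality $|R_{M,N}|\le\max_{n\in\mathcal H}\lvert\hat\varphi_{\hat\phi,M}(x_n) - \varphi_{\hat\phi}(x_n)\rvert$. Crucially, the right-hand side of the bound in \cref{thm:monte_eif_convg} does not depend on the evaluation point --- this is a consequence of the uniform score bound in \cref{assum:bound_grad} --- so a union bound over the $N/2$ points of $\mathcal H$ only inflates the slack parameter $\epsilon$ by $\log(N/2)$. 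Thus for any fixed $\epsilon>0$, provided $M > \max\bigl(C_5(p+\log(N/2)+\epsilon)C_1^2,\,C_\psi\bigr)$, with probability at least $1-2\exp(-\epsilon)$ we have
$$
|R_{M,N}| \;\le\; C_4\,\lambda_{\max}(\Sigma^{-1})\sqrt{\frac{p\log p + \log(N/2) + \epsilon}{M}}.
$$
Now substitute $M$ of order $Np\log p$ and use $p > C\log N$: then $p\log p + \log(N/2) + \epsilon = O(p\log p)$, the displayed bound collapses to $C_4\,\lambda_{\max}(\Sigma^{-1})\cdot O(N^{-1/2})$, and the same two hypotheses make the sample-size requirement on $M$ hold for all large $N$. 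Since $\lambda_{\max}(\Sigma^{-1})$ is finite by \cref{assum:fisher_inv} --- and is of the same character as the quantities that already govern the analytic estimator's asymptotic variance --- we obtain $R_{M,N} = O_p(N^{-1/2})$, hence $\hat\theta_{\mathrm{MC}} - \theta^* = O_p(N^{-1/2})$, matching the analytic one step estimator.

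The main obstacle is the step upgrading the \emph{pointwise} high-probability guarantee of \cref{thm:monte_eif_convg} to one holding \emph{simultaneously} over all held-out points: this is what forces a $\log N$ term into the tail parameter, and the hypothesis $p > O(\log N)$ is precisely what keeps that $\log N$ penalty (as well as the $(p+\epsilon)$ sample-size condition of \cref{thm:monte_eif_convg}) negligible next to $p\log p$ once $M \asymp Np\log p$. If one wants the stronger conclusion $R_{M,N} = o_p(N^{-1/2})$ so that the limiting distribution is literally unchanged, the refinement is to write $R_{M,N} = E_M\bigl(\tfrac{2}{N}\sum_{n\in\mathcal H}\grad_\phi\log p_{\hat\phi}(x_n)\bigr)$, where $E_M \coloneqq [\grad_\phi\hat\psi_M(\hat\phi)]^\top\hat I_M(\hat\phi)^{-1} - [\grad_\phi\psi(\hat\phi)]^\top I(\hat\phi)^{-1}$ is a single $L\times p$ matrix with operator norm $O_p(N^{-1/2})$, and then to note that the held-out average score is itself $o_p(1)$ (of order $\sqrt{p/N} + \lVert\hat\phi - \phi^*\rVert$, using that $\phi^*$ minimizes the KL divergence so the population score vanishes there); but for matching the convergence rate the cruder supremum bound above already suffices.
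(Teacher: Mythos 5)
Your proposal is correct and follows essentially the same route as the paper's proof: reduce the claim to bounding the difference of correction terms, bound it by the maximum MC-EIF error over the $N/2$ held-out points, apply \cref{thm:monte_eif_convg} with a union bound that inflates the tail parameter by $\log(N/2)$, and use $M \asymp Np\log p$ together with $p > O(\log N)$ to absorb that $\log N$ term and obtain an $O_p(N^{-1/2})$ discrepancy. Your closing remarks (uniformity of the bound in the evaluation point, and the optional refinement toward $o_p(N^{-1/2})$) go slightly beyond what the paper writes but do not change the argument.
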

By \cref{prop:one_step_convg}, if the analytic one step estimator achieves $O_p(\frac{1}{\sqrt{N}})$ convergence to $\theta^*$, then so does our estimator when $M = O(Np \log p)$ in \cref{alg:one_step}.

\subsection{Debiased/Double ML}
Next, we express debiased/double ML (DML)~\cite{dml} in terms of MC-EIF. To rewrite DML explicitly in terms of MC-EIF, we largely follow \citet{dml, influence_dml}.\footnote{DML constructs Neyman orthogonal corrected scores which are more general than efficient influence function corrected ones. Specifically, DML can handle functionals which are not pathwise differentiable. As we only consider parametric model families, however, we can assume without loss of generality that the functional is pathwise differentiable~\cite{orth_stat_learn}. Consequently, we can express DML in terms of efficient influence functions.} 

\textbf{Construction of Orthogonal Generalized Method of Moment (GMM) Estimators.} GMM-based estimators are defined by a nuisance functional $\eta(\cdot) \in \R^J$, $J \in \mathbb{N}$, and a set of $K \in \mathbb{N}$ functions $\{g_k(x, \eta(p_{\phi}), \theta)\}_{k=1}^K$, often called \emph{estimating equations}.\footnote{We only consider parametric models, therefore we can assume without loss of generality that the range space of the nuisance functional is finite-dimensional.} These estimating equations are selected so that their roots uniquely identify $\theta^*$ when the nuisance parameters $\eta(p_{\phi})$ are estimated correctly. That is,
\begin{equation} \label{eq:gmm}
	\E_{x \sim p_{\phi^*}(x)}[g(x, \eta(p_{\phi^*}), \theta)] = 0 \iff \theta = \theta^*,
\end{equation}

\begin{algorithm}[t]
\caption{MC-EIF one step estimator}\label{alg:one_step}
\begin{algorithmic}
\STATE {\bfseries Input:} Target functional $\psi$, initial estimate $\hat{\phi}$, held out datapoints $\{x_n\}_{n=N/2 + 1}^N$, \# Monte Carlo samples $M$
\STATE $\hat{\theta}_{\text{plug-in}} \gets \psi(\hat{\phi})$ \COMMENT{plug-in estimate}
\STATE $C = \frac{2}{N} \sum_{n=N/2+1}^N \hat{\varphi}_{\hat{\phi}, M}(x_n)$ 
\STATE {\bfseries Return:} $\hat{\theta}_{\text{plug-in}} + C$
\end{algorithmic}
\end{algorithm}

where $g \coloneqq (g_1, \cdots, g_K)$. As an example, $g$ might be the gradient of the log-likelihood function. To make GMM-based estimators less sensitive to incorrect estimation of the nuisance parameters, \citet{dml, influence_dml, auto_dml} replace $g(\cdot)$ with the \emph{orthogonal moment function}, constructed using influence functions. In our setting, the orthogonal moment function equals the following:
\begin{equation}
    g(x, \eta(p_{\phi}), \theta) +  \varphi_{\phi}(x, \theta),
\end{equation}
where $\varphi_{\phi}(x, \theta)$ is the efficient influence function associated with the functional $\mu_{\theta}(\phi) = \E_{x \sim p_{\phi}}[g(x, \eta(p_{\phi}), \theta)]$ for fixed $\theta$ by Equation 2.6 in \citet{influence_dml}. By \cref{thm:eif_fn},
\begin{equation} \label{eq:inf_gmm}
      \varphi_{\phi}(x, \theta) \coloneqq [\nabla_{\phi} \mu_{\theta}(\phi)]^T \inv{I(\phi)} \nabla_{\phi} \log p_{\phi}(x).
\end{equation}
\begin{algorithm}[t!]
\caption{MC-EIF debiased ML w/ simple data splitting}\label{alg:dml}
\begin{algorithmic}
\STATE {\bfseries Input:} Vector of estimating equations $g$, initial estimate $\hat{\phi}$, held out datapoints $\{x_n\}_{n=N/2+1}^N$, \# of Monte Carlo samples $M$
\STATE $f(\theta) \gets \frac{2}{N} \sum_{n=N/2 + 1}^N \left[ g(x_n, \eta(p_{\hat{\phi}}), \theta) +  \hat{\varphi}_{\hat{\phi}, M}(x_n, \theta)  \right]$ 
\STATE {\bfseries Return:} $\{\theta: f(\theta) = 0 \}$
\end{algorithmic}
\end{algorithm}

\begin{algorithm}[t!]
\caption{MC-EIF one step TMLE}\label{alg:meta_tmle_one_step}
\begin{algorithmic}
\STATE {\bfseries Input:} Target functional $\Psi$, initial estimate $\hat{\phi}$, held out datapoints $\{x_n\}_{n=N/2 + 1}^N$, \# Monte Carlo samples $M$
\STATE $p(\epsilon, x) = (1 + \epsilon^T \hat{\varphi}_{\hat{\phi}, M}(x)) p_{\hat{\phi}}(x)$
\STATE $\hat{\epsilon} = \arg \max_{\epsilon \in \R^L: p(\epsilon, x) \in \mathcal{M}} \frac{2}{N} \sum_{n=1}^N \log p(\epsilon, x_n)$
\STATE {\bfseries Return:} $\Psi(p(\hat{\epsilon}, \cdot))$
\end{algorithmic}
\end{algorithm}

Since $g$ is a known by assumption, we can readily use the Monte Carlo methods in~\citet{auto-encode-vb, stoch-comp-graphs} to automatically approximate $\nabla_{\phi} \mu_{\theta}(\phi)$. We summarize the DML algorithm in \cref{alg:dml} which replaces \cref{eq:inf_gmm} with our MC-EIF approximation.\footnote{In DML, the authors use $K$-fold cross-fitting instead of the simple data split described here for better finite sample properties.}

\textbf{Theoretical Guarantees.} For general estimating equations, it is difficult to quantity how errors in our MC-EIF approximation to \cref{eq:inf_gmm} lead to changes in final estimates. When the estimating equations have more structure, however, we obtain a similar result as in \cref{prop:one_step_convg}.
\begin{assumption} \label{assum:linear_gmm} $g(x_n, \eta(p_{\phi}), \theta) = m(x_n, \eta(p_{\phi})) - \theta$ for some vector of known functions $m(\cdot)$.
\end{assumption}
\cref{assum:linear_gmm} was made in several works~\cite{auto_dml, influence_dml} already. We prove an analogous rate guarantee as in \cref{prop:one_step_convg} under \cref{assum:linear_gmm}.
\begin{proposition} \label{prop:doubl_ml_convg}
	Suppose that the assumptions in \cref{thm:monte_eif_convg} and \cref{assum:linear_gmm} hold. Further, suppose $M = O(Np \log p)$ and $p > O(\log{N})$. Then, \cref{alg:dml} converges to $\theta^*$ at the same rate as the analytic DML estimator, namely the one that sets $M = \infty$ in \cref{alg:dml}.
\end{proposition}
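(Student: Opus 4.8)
The plan is to exploit the extra structure supplied by \cref{assum:linear_gmm} to reduce \cref{alg:dml} to a closed-form average, and then control the discrepancy between that average and its analytic ($M=\infty$) counterpart using the pointwise bound of \cref{thm:monte_eif_convg} together with a union bound over the held-out sample, exactly in the spirit of \cref{prop:one_step_convg}.

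First I would observe that under \cref{assum:linear_gmm} the functional whose EIF appears in \cref{eq:inf_gmm}, namely $\mu_\theta(\phi) = \E_{x \sim p_\phi}[m(x, \eta(p_\phi))] - \theta$, has a $\phi$-gradient $\nabla_\phi \mu_\theta(\phi) = \nabla_\phi \E_{x\sim p_\phi}[m(x,\eta(p_\phi))]$ that does not depend on $\theta$. Consequently both $\varphi_\phi(x,\theta)=\varphi_\phi(x)$ and its Monte Carlo surrogate $\hat\varphi_{\hat\phi,M}(x,\theta)=\hat\varphi_{\hat\phi,M}(x)$ are $\theta$-free, so the map $f(\theta)$ in \cref{alg:dml} is affine with unit slope and has the unique root
\[
\hat\theta_{\mathrm{MC}} = \frac{2}{N}\sum_{n=N/2+1}^{N}\Big[\, m(x_n,\eta(p_{\hat\phi})) + \hat\varphi_{\hat\phi,M}(x_n)\,\Big],
\]
while the analytic DML estimator is the same expression with $\varphi_{\hat\phi}(x_n)$ in place of $\hat\varphi_{\hat\phi,M}(x_n)$. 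Subtracting, the entire effect of using MC-EIF collapses to
\[
\hat\theta_{\mathrm{MC}} - \hat\theta_{\mathrm{analytic}} = \frac{2}{N}\sum_{n=N/2+1}^{N}\big(\hat\varphi_{\hat\phi,M}(x_n) - \varphi_{\hat\phi}(x_n)\big),
\]
and it suffices to bound the right-hand side in probability. The $\theta$-independence here is what makes this estimator tractable: it removes any need to argue about stability of the root set $\{\theta: f(\theta)=0\}$.

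Next I would condition on $\hat\phi$, which by the sample split is independent of the held-out points $\{x_n\}_{n>N/2}$, and apply \cref{thm:monte_eif_convg} at each $x_n$ with tail parameter inflated to $\epsilon + \log(N/2)$ (component-wise, or via the vector-valued analogue of the bound if the estimating equations are $\R^K$-valued, at the cost of a universal $\sqrt{K}$ factor). A union bound then gives, with probability at least $1 - 2e^{-\epsilon}$,
\[
\max_{n>N/2}\big\|\hat\varphi_{\hat\phi,M}(x_n) - \varphi_{\hat\phi}(x_n)\big\| \;\le\; C_4\,\lambda_{\max}(\Sigma^{-1})\sqrt{\tfrac{p\log p + \log(N/2) + \epsilon}{M}},
\]
and hence $\|\hat\theta_{\mathrm{MC}} - \hat\theta_{\mathrm{analytic}}\|$ obeys the same bound. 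Taking $M$ of order $Np\log p$ and using $p > O(\log N)$ — so that $\log N = O(p\log p)$ and the numerator is $O(p\log p)$ — makes this $O_p(1/\sqrt N)$. Integrating out $\hat\phi$ and combining with $\hat\theta_{\mathrm{analytic}} - \theta^* = O_p(1/\sqrt N)$ via the triangle inequality gives $\hat\theta_{\mathrm{MC}} - \theta^* = O_p(1/\sqrt N)$, i.e. the same rate as the analytic DML estimator.

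The only place where care is genuinely needed is the bookkeeping in this union bound: one must check that the $\log N$ penalty incurred by requiring the approximation to hold simultaneously at $\Theta(N)$ held-out points is absorbed into $p\log p$ precisely under the hypothesis $p>O(\log N)$, and that "$M=O(Np\log p)$" is read as "$M$ chosen of that order," so that the ratio $(p\log p)/M$ is $\Theta(1/N)$ rather than smaller. Everything else is inherited wholesale — the reduction to a single average from \cref{assum:linear_gmm}, and the pointwise approximation quality from \cref{thm:monte_eif_convg} — which is why the argument mirrors the one-step case almost verbatim.
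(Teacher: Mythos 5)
Your proposal is correct and follows essentially the same route as the paper: you first use \cref{assum:linear_gmm} to show the EIF in \cref{eq:inf_gmm} is $\theta$-free so that \cref{alg:dml} reduces to a closed-form average (the paper's \cref{lemma:dml_vs_one_step}), and then control the MC-vs-analytic discrepancy by the pointwise bound of \cref{thm:monte_eif_convg} with a union bound over the held-out points, which is exactly the argument of \cref{prop:one_step_convg} that the paper simply invokes at that point. The only cosmetic difference is that you re-derive that union-bound step inline rather than citing \cref{prop:one_step_convg}.
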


\subsection{Targeted Minimum Loss Estimation}

We conclude by writing targeted minimum loss estimation (TMLE)~\cite{tmle} explicitly in terms of MC-EIF. Unlike the one step estimator or DML, TMLE directly corrects the estimated distribution $p_{\hat{\phi}}(x)$ and then plugs in the corrected distribution into the functional $\Psi$ as the final estimate. To perform this correction it perturbs $p_{\hat{\phi}}$ in the direction of the influence function, searching for the optimal step size by maximizing the perturbed likelihood on the held out dataset. Intuitively, TMLE can be viewed as a form of gradient ascent in function space. We show one step TMLE~\cite{tmle} in \cref{alg:meta_tmle_one_step}. The multi-step TMLE version is computed by iterating \cref{alg:meta_tmle_one_step} multiple times until $\epsilon$ approximately equals 0~\cite{tmle}. 

\section{Experiments}
\label{sec:experiments}

We start by comparing the quality of MC-EIF against other methods for influence function approximation. Then, we show how
MC-EIF behaves when; (i) the number of Monte Carlo samples is varied, (ii) the dimensionality of the input is varied, and (iii) the efficient estimator type is varied. Our empirical results ultimately validate our theoretical results in \cref{sec:MC-EIF} and \cref{sec:programmable-DRI}. Finally, we show how MC-EIF can be used to automate the construction of efficient estimators for new functionals by revisiting a classic problem in optimal portfolio theory. Our MC-EIF implementation is publicly available in the Python package ChiRho. All results shown here are end-to-end reproducible.

\textbf{Comparison to \citet{empirical-gateaux}.}
In \citet{empirical-gateaux}, the authors target the nonparametric influence function, which is the unique influence function when $\mathcal{M} = L^2(\mathbb{P})$ in \cref{def:gateaux}. By contrast, we target the efficient influence function. Thus, for evaluation, we compare how well the empirical Gateaux method from \citet{empirical-gateaux} approximates the nonparametric influence function and how well our MC-EIF method approximates the efficient influence function on the same data-generating process. 

To have a ground truth for comparison, we select a simple model and functional where we can analytically compute the nonparametric and efficient influence functions. To this end, we consider the problem of estimating the expected density, $\Psi(\mathbb{P}) = \int \mathbb{P}(x)^2 dx$ as in \citet{bickel-density, computer-eif}. We further suppose that $x \sim N(\mu, \sigma)$. We consider two parametric model families: one where $\mu$ is unknown but $\sigma=1$, and one where both $\mu$ and $\sigma$ are unknown, which we call $\mathcal{M}_1$ and $\mathcal{M}_2$ respectively. As the nonparametric influence function makes no assumptions on the underlying model family, it remains fixed across $\mathcal{M}_1$ and $\mathcal{M}_2$ and always equals $2(\mathbb{P}(X) - \Psi(\mathbb{P}))$~\cite{computer-eif}. However, the EIF equals zero for $\mathcal{M}_1$, as the expected density does not depend on $\mu$. Hence, any plug-in estimate for models in $\mathcal{M}_1$ will result in a correct value of the expected density, and thus no distributional perturbations produce any change. In $\mathcal{M}_2$, the efficient influence function for the expected density depends on the unknown $\sigma$. See \cref{fig:analytic_inf_density} in the Appendix for a visualization of the analytic influence functions.

\begin{figure}[tbp]
  \centering
  \begin{subfigure}[t]{0.38\columnwidth}
    \vspace{-5cm}
    \includegraphics[width=\linewidth]{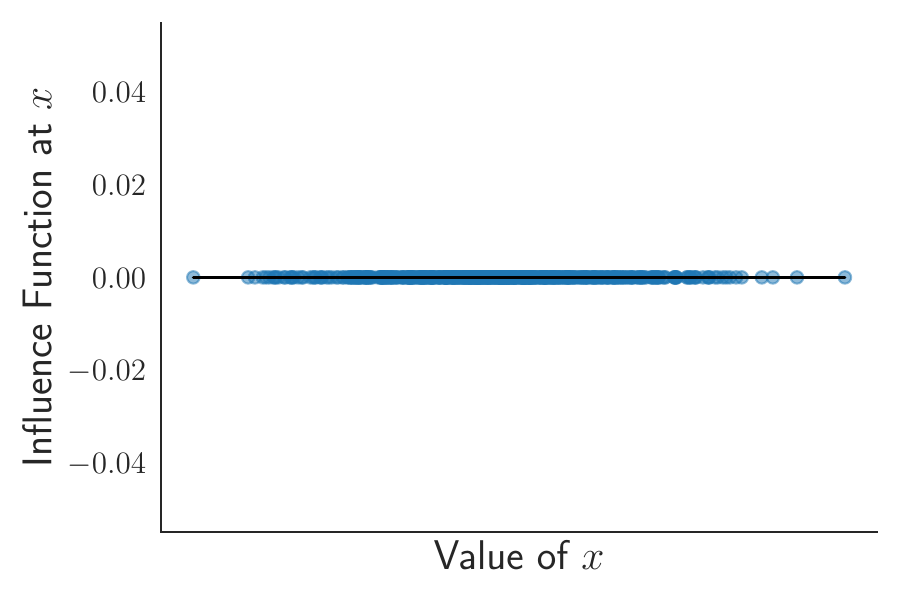}
    \vspace{-6mm}
    \caption{MC-EIF, known $\sigma^2$}
    \vspace{2mm}
    \includegraphics[width=\linewidth]{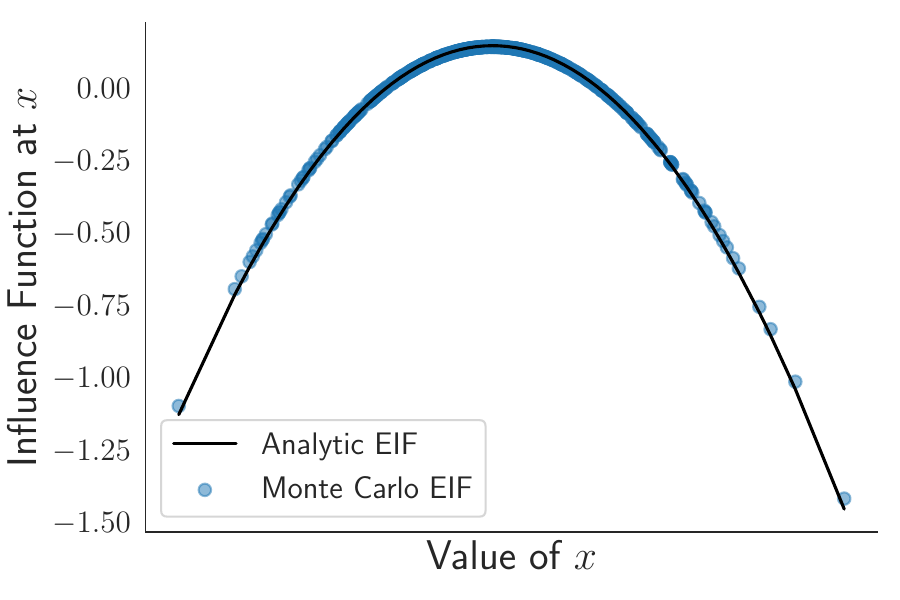}
    \vspace{-4mm}
    \caption{MC-EIF, unknown $\sigma^2$}
  \end{subfigure}
  \hfill
  \begin{subfigure}[t]{0.6\columnwidth}
    \includegraphics[width=\linewidth]{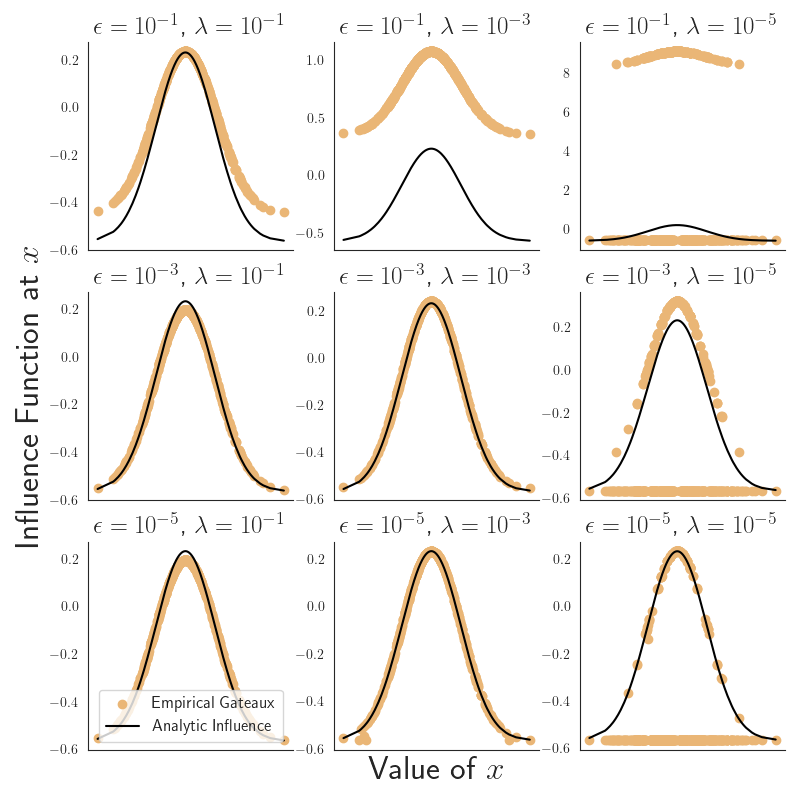}
    \label{fig:figure1}
    \vspace{-6mm}
    \caption{Empirical Gateaux}
  \end{subfigure}
  
  \caption{\textbf{Comparison between MC-EIF and empirical Gateaux approximation.} MC-EIF (a and b) is less sensitive to hyperparameters parameters ($\epsilon$ and $\lambda$) than the empirical Gateaux baseline (c).} \label{fig:emp_gat_v_mc_eif}
\end{figure}

\cref{fig:emp_gat_v_mc_eif} summarizes how well the empirical Gateaux derivative method approximates the nonparametric influence function and how well our MC-EIF method approximates the EIF at the point $p_{\phi} = N(0, 1)$. We see that MC-EIF is able to approximate the efficient influence function very well ($M = 10^4$ samples). By contrast, the empirical Gateaux derivative is highly sensitive to the choice of two kernel smoothing hyperparameters, $\epsilon$ and $\lambda$. As the true influence function is not known, it is unclear how to select $\epsilon$ and $\lambda$. Such numerical instability was already discussed in~\citet{computer-eif}, where the precision necessary must get exponentially smaller with input dimension, making it infeasible when $D \approx 10$. For us, we only have a single tunable parameter ($M$), where larger $M$ unambiguously provides a better approximation. We illustrate how the error decays as a function of $M$ in \cref{fig:monte_carlo_increase_density}. We provide an extended discussion of the automation limitations of the empirical Gateaux method in \cref{A:automatic_comp}. We attempted to use the empirical Gateaux derivative method as a baseline for other experiments, but were unable to achieve numerically stable solutions for any $p>2$ without prohibitively long run-times.

\begin{figure}[t]
  \centering\includegraphics[width=0.8\linewidth]{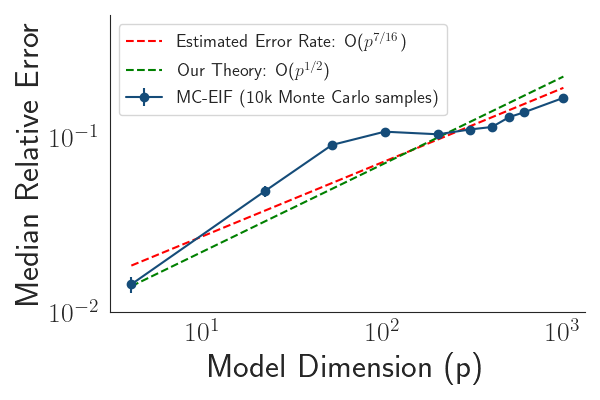}
    \vspace{-3mm}
    \caption{\textbf{Empirical evidence for convergence theory.} Increasing $p$ for the average treatment effect experiments produces MC-EIF approximation errors that closely match ~\cref{thm:monte_eif_convg}.}
    \vspace{-0.5cm}
\end{figure}

In the next two Subsections, we focus on a classic model consisting of a binary treatment, high-dimensional continouous confounders, and Gaussian distributed response; see \cref{A:experiment_details} for the precise model formula. We assume that the analyst is interested in estimating the average treatment effect (ATE), where the true ATE is zero but unknown. All influence function computations are relative to an initial point estimate $\hat{\phi}$, found through maximum a posteriori estimation using 500 training datapoints. Due to the exponential runtime in dimension for the methods in \citet{empirical-gateaux, computer-eif}, we focus on comparing MC-EIF with the analytic influence function for ATE below.

\textbf{Sensitivity to Dimensionality.} \cref{thm:monte_eif_convg} implies that for a fixed number of Monte Carlo samples $M$, the quality of the approximation degrades with the square root of model dimension $p$. In \cref{fig:dim_sens}, we empirically show how the quality of the approximation degrades as $p$ increases for $M = 10^4$ fixed. Based on \cref{fig:dim_sens}, the empirical results closely match the theoretical behavior predicted by \cref{thm:monte_eif_convg}.\footnote{As discussed in \cref{sec:theo_monte}, to make the error not grow with $p$, we would need $M \asymp p \log p$.} We also show how the computational complexity of MC-EIF scales as $p$ increases in \cref{fig:dim_sens}.

\textbf{Sensitivity to Estimator Type.} Here, we consider a high-dimensional setup where there are 200 confounders but only 500 training datapoints. We simulate 100 different datasets with this configuration to approximate the sampling distribution of different efficient estimators. In \cref{fig:estimator_down_error}, we see that across estimators, using MC-EIF instead of the true EIF results in minimal downstream error. This is consistent with our theoretical results in \cref{sec:programmable-DRI}. 

\begin{figure}[tbp]
  \centering
  \begin{subfigure}{0.32\columnwidth}
    \includegraphics[width=\linewidth]{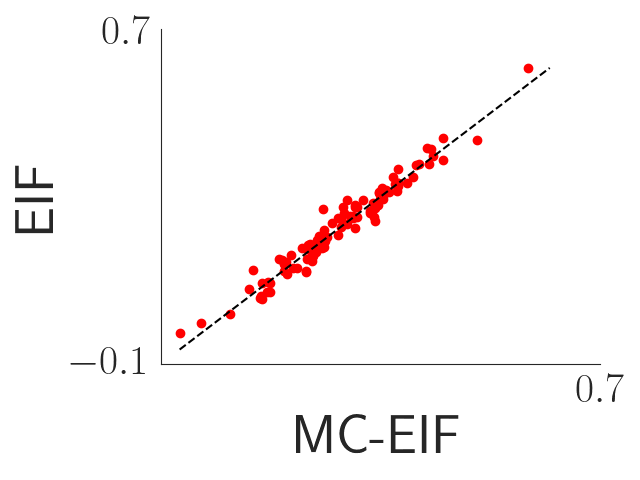}
    \caption{One Step}
  \end{subfigure}
  \hfill
  \begin{subfigure}{0.32\columnwidth}
    \includegraphics[width=\linewidth]{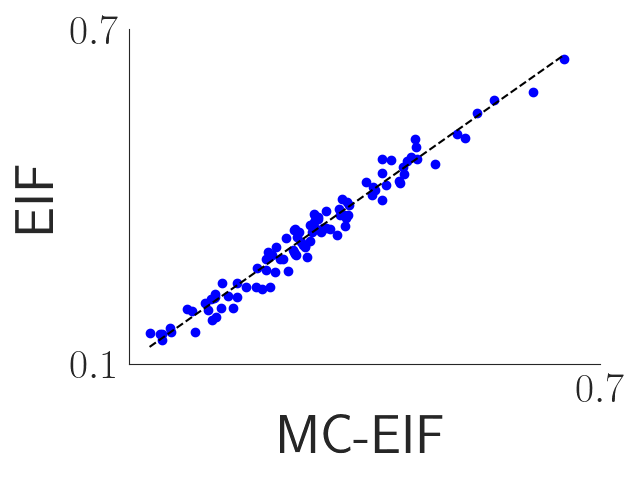}
    \caption{TMLE}
  \end{subfigure}
  \hfill
  \begin{subfigure}{0.32\columnwidth}
    \includegraphics[width=\linewidth]{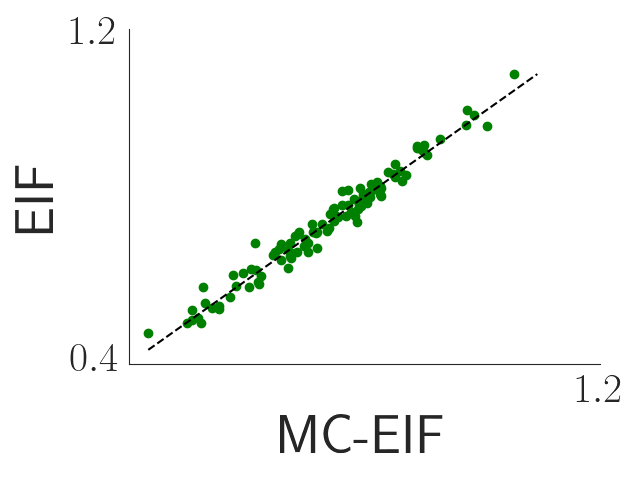}
    \caption{Double ML}
  \end{subfigure}
  \vspace{-4mm}
  \caption{\textbf{Comparison between ATE estimators using MC-EIF and analytic EIF.} MC-EIF produces ATE estimates that are very close to the diagonal, representing an oracle estimator of the EIF.} \label{fig:estimator_down_error}
\end{figure}

In \cref{fig:estimator_comp}, we find that the one step estimator empirically performs the best (over a 30\% improvement relative to the plug-in estimator). Debiased/double ML performs the worst but that might be due to not using $K$ fold cross-fitting. Nevertheless, the point of the current work is agnostic to the choice of efficient estimator. We simply show that MC-EIF can be used to automate the construction of efficient estimators for various functionals.

\textbf{Ability to Handle New Functionals.} To illustrate MC-EIF's flexibility, we revisit a classic problem in optimal portfolio theory. Suppose that $x \in \R^D$ is a vector of asset returns. We are interested in estimating the optimal portfolio weights $\theta^* \in \R^D$ that maximize the expected return while minimizing the variance of the portfolio. Then, the Markowitz optimal portfolio optimization problem~\cite{markowitz-theory} is found by solving the following convex optimization problem:
\begin{equation} \label{eq:markowitz}
	\begin{split}
	& \Psi_{\lambda}(p_{\phi}) = \arg \max_{\theta \in \R^D} \theta^T \E_{p_{\phi}}[x] - \lambda \theta^T \text{Cov}(x; p_{\phi}) \theta, \\
		& \qquad \text{subject to } \sum_{i=1}^D \theta_i = 1,
	\end{split}
\end{equation}
where $\lambda$ is the tradeoff between expected return and variance (measure of risk), and $\text{Cov}(x; p_{\phi})$ denotes the covariance matrix with respect to $p_{\phi}$. Hence, the optimal weights functional $\Psi_{\lambda}(p_{\phi})$ depends on a high-dimensional nuisance, namely the $D \times D$ covariance matrix of returns. The target $\theta^*_{\phi, \lambda} = \Psi_{\lambda}(p_{\phi})$ is a much lower $D$-dimensional target parameter. A canonical choice in the financial econometrics literature is setting $\lambda = \infty$, which corresponds to the \emph{global minimum variance portfolio}~\cite{cap-weighted-markowitz, ma-markowitz, cov-est-mtp2}.

We assume $x$ is drawn from a multivariate Gaussian distribution with unknown covariance matrix for $D=25$ and $N=1000$ datapoints. We randomly sample the true covariance matrix using a Lewandowski-Kurowicka-Joe distribution on positive definite matrices. To our knowledge, the is no literature on an efficient estimator for the minimum variance portfolio functional. Armed with MC-EIF, computing an efficient estimate for the global minimum variance portfolio functional is automatic. We evaluate MC-EIF and the one step estimator using the relative expected volatility (REV) and the root mean-squared-error (RMSE) between the estimated  and the true optimal portfolio weights. Here, the expected volatility is calculated by applying the estimated weights with the actual covariance to the objective in \cref{eq:markowitz}. Repeating our experiment using 50 randomly generated datasets, we find that MC-EIF enables substantially improved estimates, as shown in \cref{tab:Markowitz}.

\begin{table}[]
    \centering
    \begin{tabular}{c|c|c}
        & One Step MC-EIF & Plug-in\\
        \hline \hline
        REV & $\mathbf{1.86 \pm .35}$ & $2.60 \pm .35$\\
        RMSE & $\mathbf{.08 \pm .02}$ & $.14 \pm .02$\\
        \hline
    \end{tabular}
    \caption{\textbf{Empirical results for Markowitz optimal portfolio optimization.} Using MC-EIF, \cref{alg:one_step} estimates weights that achieve lower relative expected volatity (REV) and root mean-squared-error (RMSE) compared to the oracle estimator.}
    \label{tab:Markowitz}
\end{table}



\section{Conclusion}
\label{sec:conclusion}
We have shown both theoretically and empirically that MC-EIF can reliably be used to automate efficient estimation. Our key contributions include MC-EIF's consistency and capability to achieve optimal convergence rates. Empirical evidence shows that MC-EIF performs comparably to traditional estimators using analytic EIFs. Additionally, we illustrate the practical application of MC-EIF in scenarios where the analytic EIF is not known.



There is also a growing literature drawing connections between semiparametric theory and many seemingly disparate heuristic methods in deep learning~\cite{vowels2022free, bae2022if, dao2021knowledge, zhu2023doubly}. MC-EIF may help strengthen these connections from both sides: perhaps there are more existing heuristic methods that could be understood in a unified way as semiparametric computations via particular instantiations of MC-EIF; meanwhile MC-EIF's broad applicability may aid in the translation of other principled semiparametric methods into practical tools for real-world machine learning problems.


\newpage

\section{Acknowledgements}

The authors would like to thank DARPA for funding this work through the Automating Scientific Knowledge Extraction and Modeling (ASKEM) program, Agreement No. HR0011262087. The views, opinions and/or findings expressed are those of the authors and should not be interpreted as representing the official views or policies of the Department of Defense or the U.S. Government. The authors would also like to thank Tamara Broderick, David Burt, and Ryan Giordano for helpful discussions.




\bibliography{references}
\bibliographystyle{icml2024}

\newpage
\appendix
\onecolumn
\section{Proofs} \label{A:proofs}

\subsection{Proof of \cref{thm:monte_eif_convg}} \label{A:proof_mceif_convg}
\begin{proof}
	Let $\{x_m\}_{m=1}^M$ denote the $M$ Monte Carlo samples in \cref{eq:emp_fisher}, and let $\tilde{x}_m \coloneqq \frac{1}{\sqrt{D}} \nabla_{\phi} \log p_{\phi}(x_m)$ for $1 \leq m \leq M$. Let $\hat{\Sigma} = \frac{1}{M} \sum_{m=1}^M \tilde{x}_m \tilde{x}_m^T$ denote the sample covariance matrix. Then, $\Sigma = \frac{1}{D} I(\phi)$ and $\hat{\Sigma} = \frac{1}{D} \hat{I}(\phi)$. Hence,
	\begin{equation}
        \begin{split}
		 |\varphi_{\phi}(x) - \hat{\varphi}_{\phi, M}(x)| &=  \left | [\nabla_{\phi}\hat{\psi}_M(\phi)]^T (\inv{\Sigma} - \inv{\hat{\Sigma}})  \frac{\nabla_{\phi} \log p_{\phi}(x^*)}{D} + \delta_M \inv{\Sigma} \frac{\nabla_{\phi} \log p_{\phi}(x^*)}{D} \right| \\
            &\leq  \left | [\nabla_{\phi}\hat{\psi}_M(\phi)]^T (\inv{\Sigma} - \inv{\hat{\Sigma}})  \frac{\nabla_{\phi} \log p_{\phi}(x^*)}{D} \right| + \left | \delta_M \inv{\Sigma} \frac{\nabla_{\phi} \log p_{\phi}(x^*)}{D} \right| \\
            &\leq \left | [\nabla_{\phi}\hat{\psi}_M(\phi)]^T (\inv{\Sigma} - \inv{\hat{\Sigma}})  \frac{\nabla_{\phi} \log p_{\phi}(x^*)}{D} \right| + \left | \delta_M \inv{\Sigma} C_3 \right| \\
            & \leq \left | [\nabla_{\phi}\hat{\psi}_M(\phi)]^T (\inv{\Sigma} - \inv{\hat{\Sigma}})  \frac{\nabla_{\phi} \log p_{\phi}(x^*)}{D} \right| + C_3 \lambda_{\text{max}}(\inv{\Sigma}) \|\delta_M\|_F
        \end{split}
	\end{equation}
By \cref{assum:conv_functional}, $\|\delta_M\|_F < \sqrt{\frac{p + \epsilon}{M}}$ with probability greater than $1 - \exp(-\epsilon)$ when $M > C_{\psi}$. If we can prove that there exists a constant $C_4$
\begin{equation}
    \left | [\nabla_{\phi}\hat{\psi}_M(\phi)]^T (\inv{\Sigma} - \inv{\hat{\Sigma}})  \frac{\nabla_{\phi} \log p_{\phi}(x^*)}{D} \right| \leq C_4 \lambda_{\text{max}}(\inv{\Sigma})\sqrt{\frac{p\log{p} + \epsilon}{M}}
\end{equation}
with probability greater than $1 - 2\exp(-\epsilon)$, the claim follows by an application of the union bound. By Theorem 10 in \citet{cov_bounds}, the claim follows if we can prove that there exists a universal constant $C_4$ such that 
	\begin{equation}
		\left \| \frac{\nabla_{\phi} \log p_{\phi}(x^*)}{D}^T  \inv{\Sigma} \tilde{x} \right \|_{\psi_2} \left \| \nabla_{\phi}\hat{\psi}_M(\phi)^T  \inv{\Sigma} \tilde{x} \right \|_{\psi_2} < C_4 \lambda_{\text{max}}(\inv{\Sigma}),
	\end{equation}
	with probability greater than $1 - \exp(-\epsilon)$, where $\|\cdot \|_{\psi_2}$ denotes the Orlicz sub-Gaussian norm; see Equation 9 in \citet{cov_bounds} for a precise definition of the Orlicz norm of a random vector.	With probability greater than $1 - \exp(-\epsilon)$, $\|\nabla_{\phi}\hat{\psi}_M(\phi)\|_F < C_2$ by \cref{assum:conv_functional}. Hence, with probability greater than $1 - \exp(-\epsilon)$,
		\begin{equation}
			\begin{split}
				\left \| \frac{\nabla_{\phi} \log p_{\phi}(x^*)}{D}^T  \inv{\Sigma} \tilde{x} \right \|_{\psi_2} \left \| \nabla_{\phi}\hat{\psi}_M^T  \inv{\Sigma} \tilde{x} \right \|_{\psi_2} &\leq C_2 C_3 \left \|  \inv{\Sigma} \tilde{x} \right \|_{\psi_2}^2 \\
					&\leq C_2 C_3 \| \Sigma^{-1/2} \|_2^2 \| \Sigma^{-1/2} \tilde{x} \|_{\psi_2}^2 \\
					&\leq  C_1 C_2 C_3 \|\Sigma^{-1/2}\|_2^2 \| \text{cov}(\Sigma^{-1/2} \tilde{x}) \|_2 \\
					&= C_1 C_2 C_3 \| \Sigma^{-1/2} \|_2^2 \\
					&= C_1 C_2 C_3 \lambda_{\text{max}}(\inv{\Sigma}),
			\end{split}
		\end{equation}
	where the first inequality follows from \cref{assum:bound_grad}, the third by \citet{subspace_bound} and \cref{assum:subgaus}, and last by the definition of the spectral norm of a matrix. The result now follows by setting $C_4 = C_1 C_2 C_3$.
\end{proof}

\textbf{\cref{assum:conv_functional} Holds for Monte Carlo Estimators.} Here we show if $\hat{\psi}_M$ is also approximated with $M$ Monte Carlo samples, then \cref{assum:conv_functional} holds. To this end, suppose
\begin{equation}
    \grad_{\phi} \hat{\psi}_M(\phi) = \frac{1}{M}\sum_{m=1}^M \grad_{\phi} g_{\phi}(w_m), \quad w_m \overset{\text{iid}}{\sim} q(w) \quad 1 \leq m \leq M, \quad \text{s.t.} \quad \E[\grad_{\phi} g_{\phi}(w_m)] = \grad_{\phi} \psi(\phi),
\end{equation}
for some distribution $q(w)$ and function $g_{\phi}$. Such a decomposition exists, for example, when the functional is expressible as a stochastic computation graph \cite{stoch-comp-graphs} or for reparameterizable densities \cite{auto-encode-vb}. Suppose further that there exists a universal constant such that $\grad_{\phi_j} g_{\phi}(w_m) \in \R^L$ is a sub-Gaussian random vector with parameter $\sigma_\psi$ for $1 \leq j \leq p$. Then, 
\begin{equation}
    \begin{split}
            \|\delta_M \|_F &= \sqrt{\sum_{j=1}^p\sum_{l=1}^L ([(\grad_{\phi_j} g_{\phi}(w_m)]_l - [(\grad_{\phi_j} \psi(w_m)]_l )^2} \\
            &\leq \sqrt{pL} \max_{1 \leq l \leq L, 1 \leq j \leq p} |[(\grad_{\phi_j} g_{\phi}(w_m)]_l - [(\grad_{\phi_j} \psi(w_m)]_l |
    \end{split}
\end{equation}
By Exercise 2.12 in \citet{wainwright-high-stat}, 
\begin{equation}
    \max_{1 \leq l \leq L, 1 \leq j \leq p} |[(\grad_{\phi_j} g_{\phi}(w_m)]_l - [(\grad_{\phi_j} \psi(w_m)]_l | = O_p\left(\sigma_\psi \sqrt{\frac{\log(pL)}{M}}\right)
\end{equation}
Hence, since $L$ is a constant, $\|\delta_M \|_F = O_p\left(\sqrt{\frac{p\log(p)}{M}}\right)$. Thus, the first equation in \cref{assum:conv_functional} holds. Under \cref{assum:bound_grad}, the second equation in \cref{assum:conv_functional} trivially holds using gradient clipping with $C_2$.

\subsection{Proof of \cref{prop:one_step_convg}}
\begin{proof}

By the Law of Large Numbers, $\frac{2}{N} \sum_{n=N/2+1}^N \varphi_{\hat{\phi}}(x_n)$ converges to $\E_{x \sim p_{\phi^*}(x)}[\varphi_{\hat{\phi}}(x)]$ at a $O_p\left(\frac{1}{\sqrt{N}}\right)$ rate. Hence, it suffices to prove that difference between the analytic one step estimator  and \cref{alg:one_step} with finite $M$ decays at a $O_p\left(\frac{1}{\sqrt{N}}\right)$ rate. To this end, their difference equals
\begin{equation}
	\left | \frac{2}{N} \sum_{n=N/2+1}^N \left(\varphi_{\hat{\phi}}(x_n) - \hat{\varphi}_{\hat{\phi}, M}(x_n) \right) \right | \leq \max_{n=N/2 + 1, \cdots, N}  \left | \varphi_{\hat{\phi}}(x_n) - \hat{\varphi}_{\hat{\phi}, M}(x_n) \right |.
\end{equation}
Let $\epsilon > 0$ and suppose $M = C_4^2 \max(C_5, 1)(p \log(p) + \epsilon)\max(C_1^2, 1) N \lambda_{\text{max}}^2(\inv{\Sigma}) = O(pN)$, where $C_1$ is defined in \cref{assum:subgaus}, and  $C_5$ and $\inv{\Sigma}$ are defined in \cref{thm:monte_eif_convg}. Then, it suffices to prove for this choice of $M$ that
\begin{equation}
	 \ptrue \left(\max_{n=N/2 + 1, \cdots, N}  \left | \varphi_{\hat{\phi}}(x_n) - \hat{\varphi}_{\hat{\phi}, M}(x_n) \right | >  \sqrt{\frac{2}{N}} \right) \leq \exp(-\epsilon).
\end{equation}
Let $\epsilon_N = \epsilon + \log(N/2)$. By \cref{thm:monte_eif_convg} and the union bound,
\begin{equation} \label{eq:bound_helper1}
	\begin{split}
	 & \ptrue \left(\max_{n=N/2 + 1, \cdots, N}  \left | \varphi_{\hat{\phi}}(x_n) - \hat{\varphi}_{\hat{\phi}, M}(x_n) \right | >  C_4 \lambda_{\text{max}}(\inv{\Sigma})\sqrt{\frac{p \log(p) + \epsilon_N}{M}} \right) \\
  &\leq \frac{N}{2} \sum_{n=N/2 + 1}^N \ptrue \left( \left | \varphi_{\hat{\phi}}(x_n) - \hat{\varphi}_{\hat{\phi}, M}(x_n) \right | >  C_4 \lambda_{\text{max}}(\inv{\Sigma})\sqrt{\frac{p \log(p) + \epsilon_N}{M}} \right) \\
	 	&\leq  N/2 \exp(-\epsilon_N) \\
	 	& = \exp(-\epsilon_N + \log N/2) \\
	 	& =  \exp(-\epsilon)
	 \end{split}
\end{equation}
Now,
\begin{equation} \label{eq:bound_helper2}
	\begin{split}
	C_4 \lambda_{\text{max}}(\inv{\Sigma})\sqrt{\frac{p \log(p) + \epsilon_N}{M}} &= C_4 \lambda_{\text{max}}(\inv{\Sigma})\sqrt{\frac{p \log(p) + \epsilon + \log(N/2)}{C_4^2 \max(C_5, 1)(p + \epsilon)\max(C_1^2, 1) N  \lambda_{\text{max}}^2(\inv{\Sigma})}} \\
		& \leq \sqrt{\frac{p \log(p) + \epsilon + \log(N/2)}{(p \log(p) + \epsilon) N }} \\
		& \leq \sqrt{\frac{2}{N}}.
	\end{split}
\end{equation}
The proof now follows from \cref{eq:bound_helper1} and \cref{eq:bound_helper2}.

\end{proof}

\subsection{Proof of \cref{prop:doubl_ml_convg}}

\begin{lemma} \label{lemma:dml_vs_one_step}
Suppose \cref{assum:linear_gmm} holds. Then, $\hat{\theta}_{\text{DML}} = \frac{2}{N} \sum_{n=N/2+1}^N  m(x_n, \eta(p_{\hat{\phi}})) + \sum_{n=1}^N  \varphi_{\hat{\phi}}(x_n)$, where $\varphi_{\phi}(x)$ is the influence function associated with the functional $\psi(\phi) = \E_{x \sim p_{\phi}(x)}[m(x_n, \eta(p_{\phi}))]$.
\end{lemma}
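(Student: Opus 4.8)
The plan is to solve the root equation defining $\hat\theta_{\text{DML}}$ in \cref{alg:dml} in closed form, exploiting the special structure granted by \cref{assum:linear_gmm}. Recall that \cref{alg:dml} returns $\{\theta : f(\theta) = 0\}$ with $f(\theta) = \frac{2}{N}\sum_{n=N/2+1}^N\big[g(x_n,\eta(p_{\hat\phi}),\theta) + \varphi_{\hat\phi}(x_n,\theta)\big]$ (I treat the analytic version; replacing $\varphi_{\hat\phi}$ by the MC-EIF $\hat\varphi_{\hat\phi,M}$ changes nothing in what follows). The first step is to substitute $g(x_n,\eta(p_{\hat\phi}),\theta) = m(x_n,\eta(p_{\hat\phi})) - \theta$ from \cref{assum:linear_gmm}.

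The second, and only substantive, step is to show that the influence-function correction does not depend on $\theta$. By construction $\varphi_\phi(\cdot,\theta)$ is the EIF of $\mu_\theta(\phi) = \E_{x\sim p_\phi}[g(x,\eta(p_\phi),\theta)] = \psi(\phi) - \theta$, where $\psi(\phi) := \E_{x\sim p_\phi}[m(x,\eta(p_\phi))]$; since $\theta$ enters $\mu_\theta(\phi)$ only as a term that is constant in $\phi$, we have $\nabla_\phi\mu_\theta(\phi) = \nabla_\phi\psi(\phi)$, which is independent of $\theta$, so by \cref{thm:eif_fn}
\[
\varphi_\phi(x,\theta) = [\nabla_\phi\mu_\theta(\phi)]^T I(\phi)^{-1}\nabla_\phi\log p_\phi(x) = [\nabla_\phi\psi(\phi)]^T I(\phi)^{-1}\nabla_\phi\log p_\phi(x) = \varphi_\phi(x),
\]
the influence function of $\psi$. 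Plugging this and Step 1 into $f$, the map $f$ becomes affine in $\theta$ with identity slope, because the held-out index set has exactly $N/2$ elements and hence $\frac{2}{N}\sum_{n=N/2+1}^N\theta = \theta$; that is, $f(\theta) = \frac{2}{N}\sum_{n=N/2+1}^N m(x_n,\eta(p_{\hat\phi})) + \frac{2}{N}\sum_{n=N/2+1}^N \varphi_{\hat\phi}(x_n) - \theta$. Setting $f(\theta)=0$ yields the unique root, which is exactly the asserted expression (with the second sum carrying the same $\frac{2}{N}\sum_{n=N/2+1}^N$ normalization as the held-out average).

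I do not anticipate a genuine obstacle; the lemma is a bookkeeping computation whose whole content is the cancellation $\nabla_\phi(\psi(\phi)-\theta) = \nabla_\phi\psi(\phi)$, which removes the $\theta$-dependence from the EIF correction and makes $f$ affine. The places that merely require care are: (i) tracking the normalizing constant so that $\frac{2}{N}\sum_{n=N/2+1}^N\theta$ collapses to $\theta$; (ii) for vector-valued $\theta\in\R^L$, noting that $f$ has Jacobian $-\mathrm{Id}$ in $\theta$, so the solution set $\{\theta: f(\theta)=0\}$ is genuinely a singleton; and (iii) observing that the natural Monte Carlo construction of $\nabla_\phi\hat\mu_{\theta,M}(\phi)$ inherits the same additive $\theta$-structure, so the identical argument yields the analogous statement for the $M<\infty$ estimator, which is what \cref{prop:doubl_ml_convg} then leverages to reduce the DML analysis to \cref{prop:one_step_convg}.
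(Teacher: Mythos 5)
Your proposal is correct and follows essentially the same route as the paper: show that the EIF correction $\varphi_{\hat\phi}(x,\theta)$ is independent of $\theta$ because $\nabla_\phi\mu_\theta(\phi)=\nabla_\phi\psi(\phi)$ under \cref{assum:linear_gmm}, substitute the linear moment structure, and solve the resulting affine root equation. Your explicit tracking of the $\frac{2}{N}\sum_{n=N/2+1}^N$ normalization is in fact cleaner than the paper's statement, which drops that factor on the influence-function sum.
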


\begin{proof}
	We claim $\varphi_{\phi}(x, \theta) = \varphi_{\phi}(x)$ for all $\theta$. To prove this claim, notice that $\nabla_{\phi} \mu_{\theta}(\phi) = \nabla_{\phi} \mu_{\theta^{\prime}}(\phi)$ for arbitrary $\theta$ and $\theta^{\prime}$ since $\mu_{\theta}(\phi) = \E_{x \sim p_{\phi}}[m(x_n, \eta(p_{\phi}))] - \theta$. Hence, the claim follows from \cref{eq:inf_gmm}. Now,
   \begin{equation}
	   \frac{2}{N} \sum_{n=N/2 + 1}^N \left[ g(x_n, \eta(p_{\hat{\phi}}), \theta) +  \varphi_{\hat{\phi}}(x_n, \theta) \right] = \frac{1}{N} \sum_{n=1}^N  \left[ m(x_n, \eta(p_{\hat{\phi}})) +  \varphi_{\hat{\phi}}(x_n) \right]  - \theta.
   \end{equation}
   Hence, $\hat{\theta}_{\text{DML}} = \frac{2}{N} \sum_{n=N/2 + 1}^N  m(x_n, \eta(p_{\hat{\phi}})) + \sum_{n=1}^N  \varphi_{\hat{\phi}}(x_n)$.
\end{proof}

\begin{proof}[Proof of \cref{prop:doubl_ml_convg}]
	By \cref{lemma:dml_vs_one_step}, \cref{alg:dml} uses the same correction term $C$ in \cref{alg:one_step}. Hence, the proof of \cref{prop:doubl_ml_convg} now follows from \cref{prop:one_step_convg}.
\end{proof}

\begin{remark}
	By \cref{lemma:dml_vs_one_step}, the only difference between \cref{alg:dml} and  \cref{alg:one_step} is a different value for the initial estimate of $\theta^*$. Specifically, in DML, the initial estimate of $\theta^*$ is $\frac{2}{N} \sum_{n=N/2+1}^N  m(x_n, \eta(p_{\hat{\phi}}))$, which averages over datapoints drawn from the true distribution. By contrast, \cref{alg:one_step} uses $\hat{\theta}_{\text{plug-in}}$, which averages over datapoints \emph{simulated} from $p_{\hat{\phi}}$. 	
\end{remark}



\section{Brief Survey of Recent Specialized Estimators}
\label{A:Survey}
In this section we provide a brief survey of only the past 3 years of work on specialized efficient estimators for particular model-functional combinations.
For TMLE~\cite{tmle}; cluster-randomized trials~\cite{tmle-cluster-randomized}, continuous time-dependent interventions~\cite{tmle-continuous-time}, mixed experimental and observational data~\cite{tmle-experiment-selection}, mediation analysis with longitudinal data~\cite{tmle-longitudinal-mediation}, subgroup treatment effect estimation~\cite{tmle-subgroup}, survival and competing risks analysis~\cite{tmle-survival-competing-risks}, effect estimation with continuous time-to-event outcomes~\cite{tmle-time-to-event}, and variable importance measures for effect estimates~\cite{tmle-variable-importance}. For double/debiased machine learning~\cite{dml}; difference-in-differences~\cite{double-did}, instrumental variable designs~\cite{double-iv}, and mediation analysis~\cite{double-mediation}.

\section{Code Examples} \label{A:code_ex}

\subsection{Automatically Differentiable Functionals} \label{A:autodiff_functionals}

The implementation of differentiable functional approximations is fairly straightforward when using modern autodifferentation tools. For example, the squared density functional for a mean-zero, univariate normal can be approximated using Monte-Carlo as follows:

\begin{equation}
    \frac{1}{N}\sum_{n=1}^N \mathcal{N}\left(x_n, \sigma^2\right);\ \ x_n \thicksim \mathcal{N}\left(0, \sigma^2\right)
\end{equation}

This can be implemented in pytorch \cite{pytorch}, and thus automatically differentiated with respect to $\sigma$ (called ``scale'' in the code block below) using, for example, \texttt{torch.autograd.grad}.

\begin{lstlisting}[language=Python, caption=Automatically Differentiable Monte Carlo Approximation of Integrated Squared Normal Density]
import torch

def diffable_mc_integ_squared_norm_density(scale: torch.Tensor, num_monte_carlo: int):
    assert scale.requires_grad
    # Sample from the density
    samples = torch.distributions.Normal(0., scale).sample((num_monte_carlo,))
    # Evaluate those samples under the density.
    logprobs = torch.distributions.Normal(0., scale).log_prob(samples)
    # Return the mean pdf value in a numerically stable way.
    return torch.exp(torch.logsumexp(logprobs, dim=0)) / torch.numel(samples)
\end{lstlisting}

\section{So, What's Automatic?} \label{A:automatic_comp}

\newcommand{\functional}{\Psi}
\newcommand{\influencefunction}{\varphi}
\newcommand{\expyxawrt}[1]{\E_{#1}\left[Y \mid X, A = 1\right]}
\newcommand{\obs}{O}
\newcommand{\est}[1]{\tilde{#1}}

\begin{figure}[ht]
	\centering
	\includegraphics[width=0.8\textwidth]{./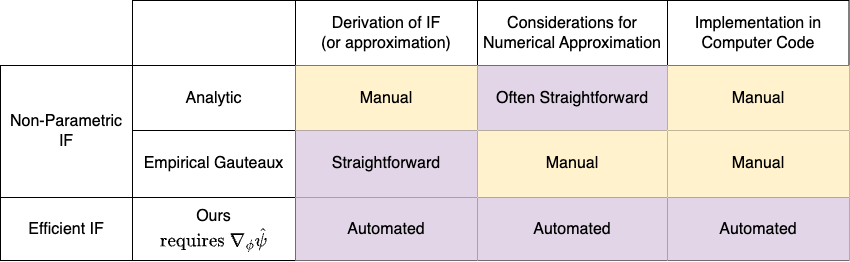}
	\caption{We taxonomize the workflow of robust estimation into three stages: the derivation of an (approximate and/or efficient) influence function, the numerical derivation and analysis required for its computation, and the code required to compute it. For the analytic workflow, the derivation of the IF results in \cref{eq:analytic_if_mpe}. This largely involves terms already required by the original plug-in (\cref{eq:mpe}), but still must be implemented on a case-by-case basis in code. For the ``Empirical Gateaux'' workflow, the first stage requires only the general purpose \cref{eq:emp_gat_if_approx}, but demands case-specific numerical considerations and derivations like the one shown in \cref{eq:emp_gat_mpe_numeric}. In stark contrast, given a differentiable approximation to the functional of interest, $\ourapproach$ ``automates'' each stage through use of an end-to-end, general purpose solution.}
	\label{fig:robustautomationgrid}
\end{figure}

The work required to perform robust estimation can be subdivided into a few key steps. The process begins with a functional of interest, $\Psi$. With this functional in hand, an analyst must first derive the influence function (or an approximation thereof), and consider any nuances in numerically approximating that quantity. Finally, an engineer must implement that approximation as executable code. Different approaches boast varying levels of ``automation'' for each step. We claim that in problems where our conditions hold (as outlined in \cref{sec:theo_eif}), $\ourapproach$ provides end-to-end automation via a general-purpose solution at each stage. Here, we contrast our approach with both the analytic (see e.g. \citet{kennedy-semi-emp-review}) and ``Empirical Gateaux'' workflows \citep{jordanDataDrivenInfluenceFunctions2023, computer-eif}. We will track a workflow's ``products'' at each of the three stages: first, the derivation of the (efficient and/or approximate) influence function; second, a tractable version of the influence function that properly considers its numerical nuances; third, executable code that computes the influence function. Because our approach uses a general purpose formulation for each of these three stages, we call our approach ``automated.''

We assume the workflow starts having identified a functional of interest and having implemented, in code, a plug-in estimator for it. Throughout, we will use the mean-potential outcome (MPE) functional as our working example\footnote{For simplicity in exposition, but without loss of generality in regards to this description of what is and isn't automated, we follow \citet{jordanDataDrivenInfluenceFunctions2023} in assuming that outcome and confounders are continuous real numbers, while the treatment is binary.}:

\begin{equation} \label{eq:mpe}
\functional\left(\p\right) = \E_\p\left[\expyxawrt{\p}\right] = \int \int y\frac{p\left(y,A=1,x\right)}{p\left(A=1, x\right)} p\left(x\right)dydx
\end{equation}

\subsection{Analytic Workflow}

The analytic workflow begins by deriving a closed form influence function — a challenging task even for seasoned experts. This first stage culminates in the following analytic influence function for the MPE \citep{kennedy-semi-emp-review}.

\begin{equation} \label{eq:analytic_if_mpe}
    \influencefunction\left(\obs;\p\right) = \frac{\mathbb{I}\left(A=1\right)}{\p\left(A=1\mid X\right)} \left\{Y - \expyxawrt{\p}\right\} + \expyxawrt{\p} - \functional\left(\p\right)
\end{equation}

For general functionals, the derivation resulting in \cref{eq:analytic_if_mpe} is challenging, even for experts — but given such a derivation, it is often the case that the computation of the quantities composing it can share code and numerical considerations developed to estimate the original, plug-in functional (e.g. \cref{eq:mpe}). Indeed, the influence function of the MPE (\cref{eq:analytic_if_mpe}) involves only terms that an analyst has already considered and implemented for the plug-in (\cref{eq:mpe}). For this reason, we say that in the ``analytic'' workflow, most of the labor must be allocated to deriving the influence function — tractable, well behaved computation of that influence function tends to involve straightforward extensions of tooling and analysis that already exists for the plug-in.

The last stage is the implementation of that tooling in computer code, which will always require some work on a case-by-case basis.

\subsection{Empirical Gateaux Workflow}

The workflow presented by \citet{jordanDataDrivenInfluenceFunctions2023} and \citet{computer-eif} significantly reduces the resources required in the first stage — the derivation of the influence function — by providing a general purpose, finite-difference approximation to the influence function (\cref{eq:emp_gat_if_approx}). $\est{\p}_{\epsilon,\lambda}$, here, represents a perturbation of the estimated distribution $\est{\p}$ of size $\epsilon$ in the direction of a $\lambda$-smooth kernel centered at observation $\obs$.

\begin{equation} \label{eq:emp_gat_if_approx}
    \est{\influencefunction}\left(\obs;\est{\p}\right) = \frac{1}{\epsilon} \left(\functional\left(\est{\p}^\obs_{\epsilon,\lambda}\right) - \functional\left(\est{\p}\right)\right)
\end{equation}

At first glance, it seems that computing this term would follow easily given a general purpose framework for the perturbation of $\est{\p}$, and then applying the plug-in functional to that perturbed density. Unfortunately, computing $\functional\left(\est{\p}^\obs_{\epsilon,\lambda}\right)$ presents a number of numerical challenges in practice. As exhibited in \cref{fig:emp_gat_v_mc_eif} (which echoes figure 1 in the work by \citet{computer-eif}), selecting appropriate perturbation parameters $\epsilon$ and $\lambda$ a priori is challenging, and a battle-tested framework for doing so has not yet been developed. Further, in high dimensions (where $\ourapproach$ excels), the required $\epsilon$ can be so small as to quickly overrun floating point accuracy on modern computers when even $D \approx 10$ \cite{computer-eif}. Indeed, \citet{jordanDataDrivenInfluenceFunctions2023} have explicitly left thorough numerical analysis of this approach to further work. In footnote 7, they anecdotally report that quadrature methods were overly sensitive in evaluating perturbed densities in the MPE functional, and instead present a Monte Carlo approach tailored to the task. Unfortunately, neither the numeric considerations or code-implementations of the plug-in estimator easily translate when computing the plug-in with respect to the \textit{perturbed} data distribution. Below, we show their numeric approximation\footnote{This is their equation 73 in appendix E.3.} of \cref{eq:emp_gat_if_approx} for the MPE, where observation $o$ comprises $(x, a, y)$, the perturbation kernel $K$ has bandwidth $\lambda$, and they use $N$ monte-carlo samples from a uniform kernel over confounder $x$.

\begin{align} \label{eq:emp_gat_mpe_numeric}
\begin{split}
    \est{\influencefunction}_{\lambda, \epsilon}\left(o\right) &= \frac{1}{N}\sum_k \left(\frac{(1-\epsilon)\left( \sum_{j:A_j=1} K(X_j - \tilde{x}_k) Y_j \right) P(A=1) + \epsilon y_i \mathbb{I}\left[a_i=1\right] \cdot 1}{(1-\epsilon)p(A=1, \tilde{x}_k) + \epsilon \mathbb{I}\left[a_i=1\right]} \right) \\ &+ (1 - \epsilon) \frac{1}{N} \sum_k \frac{\tilde{p}(\tilde{x}_k)}{\tilde{p}_\epsilon (A=1, \tilde{x}_k)} \mathbb{I}\left[a_i=1\right] \left\{ y_i - \mathbb{E}_{\tilde{P}} [Y | A=1, \tilde{x}_k] \right\}
\end{split}
\end{align}

Indeed, just like finite differencing can simplify multivariate calculus, but introduce numeric challenges, the empirical-gateaux approach makes variational calculus easier, but introduces numeric challenges. In sum, we consider the second, ``numerical,'' stage of this workflow to be both labor and expertise intensive, even when the curse of dimensionality does not render it moot.

Like in the analytic workflow, the ``coding'' stage of course requires case-by-case implementations. Moreover, added numerical challenges here introduce significant nuance in implementation that isn't present in the analytic case.

\subsection{Our Workflow}

In stark contrast, our workflow exploits general solutions in all three stages for the ``price'' of differentiability of a parametric plug-in estimator. When our general conditions are met (as outlined in \cref{sec:theo_eif}), \cref{eq:eif} provides the general purpose solution to the first stage of deriving an (approximate, efficient) influence function, and the second stage is achieved with the Monte-Carlo approximation in \cref{eq:mceif}.

The third stage is met in software implementing this general purpose solution that operates on functional implementations using one of many auto-differentiation tools now ubiquitous in machine learning (see \cref{A:autodiff_functionals} for a simple example). As discussed at the end of \cref{sec:monte_eif}, this sometimes requires the ability to exploit methods like the reparameterization trick for the functional of interest. In many cases, modern automatic differentiation software makes this trivial (as shown in \cref{A:autodiff_functionals}). For some functionals, however, like those involving inner optimizations, this may be more challenging.

This general purpose approximation underpins the end-to-end automation of $\ourapproach$, and is to the best of our knowledge the only such general purpose approximation for efficient influence functions.

\section{An EIF cookbook} \label{A:eif_cookbook}

In this section we describe some forward-looking generalizations and applications of MC-EIF.

\textbf{Multi-argument functionals} Many important quantities in statistics and machine learning, like the mutual information $\mathbb{I}[X; Y]$ or KL-divergence $\mathbb{KL}[P; Q]$ are functionals of more than one probability distribution. As shown in \cite{kandasamy2015nonparametric}, we can define partial EIFs analogous to partial derivatives for these quantities (which can then be plugged into the efficient estimators of \cref{sec:programmable-DRI}) by treating all but one argument as part of the functional and computing the ordinary EIF with respect to that argument.

\textbf{Higher-order EIFs} Although all of the efficient estimators of \cref{sec:programmable-DRI} are derived from the first-order EIF, there are some circumstances where incorporating higher-order EIFs can be shown to be theoretically necessary for achieving certain statistical properties \cite{robins2008higher,balakrishnan2023fundamental}. Just as ordinary higher-order derivatives are computed by recursively applying a first-order derivative operator to its output, higher-order EIFs can be computed by recursively applying a first-order EIF operator to its own output \cite{van2014higher}, a property straightforwardly inherited by MC-EIF.

\textbf{Models with latent variables} Thus far, we have assumed that we can exactly simulate from model predictive distributions $x \sim p_\phi$ and compute log-densities $\log p_\phi (x)$, score functions $\nabla_\phi \log p_\phi (x)$ and Hessian-vector products. However, our MC-EIF estimator can be extended straightforwardly to models with latent variables and intractable densities and score functions by using a nested Monte Carlo procedure \cite{rainforth2018nesting,syed2023optimal} to approximate the prior predictive or posterior predictive distributions and plugging the resulting stochastic estimates into the vanilla MC-EIF framework. We expect our theoretical results to extend to this case provided the approximation error can be made small relative to the Monte Carlo error in estimating the Fisher matrix from a finite set of model Monte Carlo samples.

\textbf{Infinite-dimensional models and targets} Semiparametric statistics is by definition fundamentally concerned with models that contain infinite-dimensional (i.e. function-valued) components. There is also intense interest in deriving efficient, doubly robust estimators for infinite-dimensional target functionals like the conditional average treatment effect (CATE) in causal inference. Fortunately, in many of these settings the infinite-dimensional quantities can be reduced to finite ones (and ultimately must be to be representable on a digital computer) to which MC-EIF may be applied in a straightforward way. For example, a Gaussian process is fully characterized by the latent function's values on a finite set of test points; computing the EIF for a functional of the GP reduces to computing the EIF of the finite-dimensional joint distribution on function values at the test points, which is straightforward to estimate with MC-EIF. Similarly, in the case of the CATE, we are ultimately interested in the values of the CATE function on a finite set of test inputs, reducing the problem to ordinary MC-EIF for a finite-dimensional target functional.

\section{Additional Experiment Details} \label{A:experiment_details}

\subsection{Model Details}

In \cref{sec:experiments}, we consider the following model with confounders $c$, treatment $t$, and response $y$:
\begin{equation} \label{eq:causal_glm_model}
	\begin{split}
		& \mu_0 \sim  \mathcal{N}(0, 1), \quad \text{(intercept)} \\
		& \xi \sim   \mathcal{N}\left(0, \frac{1}{\sqrt{F}} I_F\right), \quad \text{(outcome weights)} \\
		& \pi \sim   \mathcal{N}\left(0, \frac{1}{\sqrt{F}} I_F\right), \quad \text{(propensity weights)} \\
		& \tau \sim   \mathcal{N}(0, 1), \quad \text{(treatment weight)} \\
		& c_n \sim  \mathcal{N}(0, I_D), \quad \text{(confounders)} \\
		& t_n \mid c_n, \pi \sim \text{Bernoulli}(\text{logits} = \pi^T c_n), \quad \text{(treatment assignment)} \\
		& y_n \sim  \mathcal{N}(\tau t_n + \xi^Tc_n + \mu_0 , 1), \quad \text{(response)} \\
	\end{split}
\end{equation}
where $F \in \mathbb{N}$ denotes the number of confounders. In this example, $x = (c, t, y) \in \R^{D}$, where $D = F + 2$ and $\phi = (\mu_0, \xi, \pi, \tau) \in \R^{2F + 2}$. To obtain a point estimate in \cref{sec:experiments}, we take the maximum a posteriori estimate. In \cref{sec:experiments}, we vary the model dimension $p$ by varying $F$ since $p = 2F + 2$. 

\subsection{Additional Figures}

Here, we provide experimental results that provide interesting insight, but do not directly support the key claims of our paper.

\begin{figure}[ht]
	\centering
	\includegraphics[width=0.75\textwidth]{./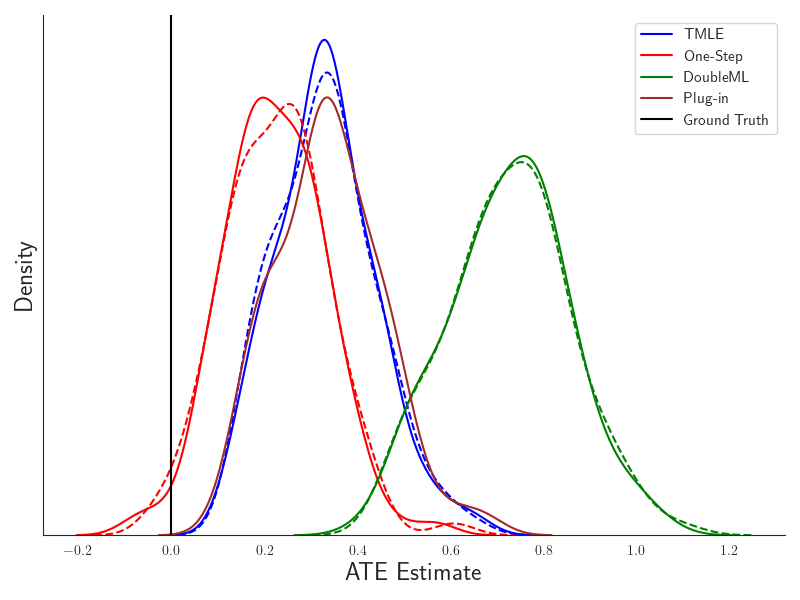}
	\caption{Comparison of plug-in estimator and efficient estimators using MC-EIF and analytic EIF for estimating ATE. The true ATE is 0. Closer to zero the better. The distribution is over 100 simulated datasets. Dashed lines represent the estimates using the analytic EIF, and the solid lines represent using MC-EIF (when applicable).} \label{fig:estimator_comp}
\end{figure}

\begin{figure}[ht]
	\centering
	\includegraphics[width=0.75\textwidth]{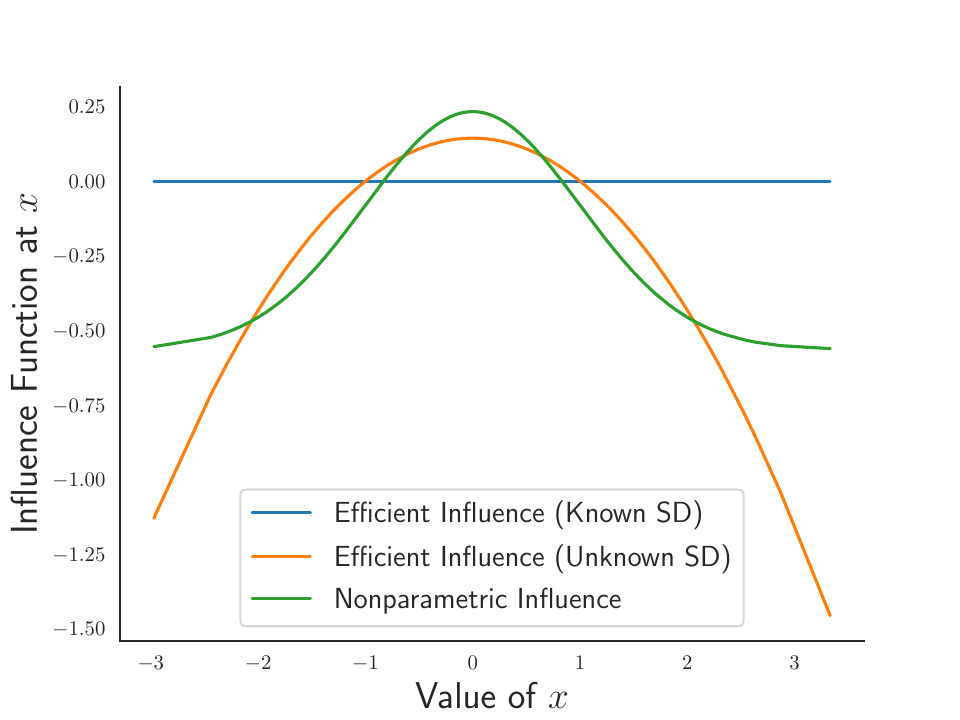}
	\caption{Nonparametric and efficient influence functions for expected density.}
	\label{fig:analytic_inf_density}
\end{figure}
\begin{figure}[ht]
\centering
\includegraphics[width=.75\linewidth]{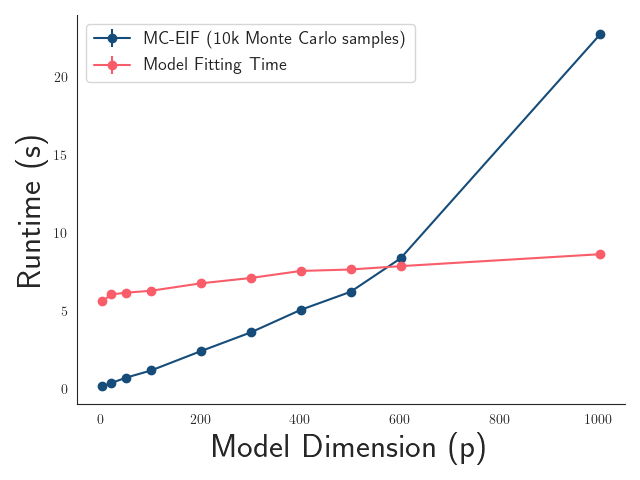}
  \caption{Runtime of fitting point estimate and computing MC-EIF as a function of model size.} \label{fig:dim_sens}
\end{figure}

\begin{figure}[ht]
	\centering
	\includegraphics[width=.75\linewidth]{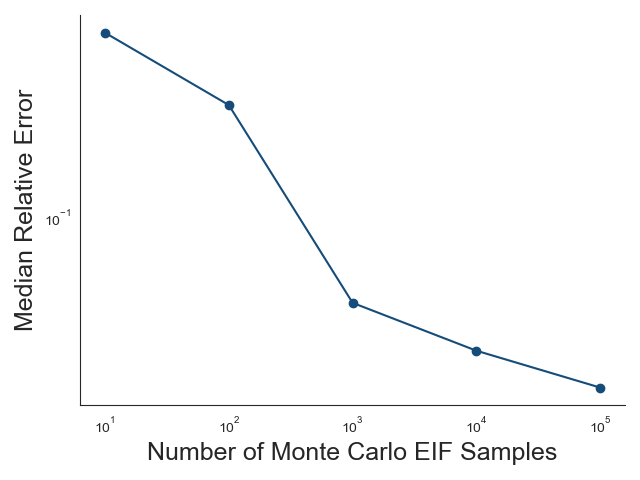}
	\caption{Median relative error between MC-EIF and true efficient influence function for unknown variance model and expected density functional. Median absolute error computed by randomly sampling points to evaluate EIF, computing the relative error at each point, and then taking the median.} \label{fig:monte_carlo_increase_density}
\end{figure}

\end{document}